\definecolor{tum_blue}{RGB}{0, 115, 207}\colorlet{col_section }{tum_blue}
\tikzset{
  treenode/.style = {shape=rectangle, rounded corners,
                     draw, align=center,
                     top color=white, bottom color=blue!20},
  root/.style     = {treenode, font=\normalsize, bottom color=blue!30},
  decision/.style      = {treenode, font=\normalsize, bottom color=red!30},
  env/.style      = {treenode, font=\normalsize},
  dummy/.style    = {circle,draw}
}
\newcolumntype{Y}{>{\centering\arraybackslash}X}
\newtheorem{definition}{Definition}
\newtheorem{proposition}{Proposition}
\newtheorem{assumption}{Assumption}
\newtheorem{corollary}{Corollary}
\newtheorem{example}{Example}
\newenvironment{continued}[1][continued]{\begin{trivlist}
\item[\hskip \labelsep {\bfseries #1}] \itshape}{\end{trivlist}}
\newenvironment{assumptionp}[1]{
  
  \assumptionalt
}{\endassumptionalt}
\title{Combining stated and revealed preferences\thanks{First version: 14 July 2025. This version: \today. This paper supersedes an earlier version by the first author entitled ``Using stated preferences to understand actual choice''. The authors thank Eleanor Dickens for invaluable research assistance. The authors are grateful to Johannes Abeler, Abi Adams, St\'ephane Bonhomme, Kirill Evdokimov, Christophe Gaillac, Pamela Giustinelli, Max Kasy, Brendan Pass, Fran\c{c}ois Poinas, Martin Weidner, Basit Zafar, and participants in the BSE Summer Forum workshop on Microeconometrics and Policy evaluation for stimulating discussions and useful comments at different stages of this paper. The usual disclaimer applies. Correspondence address: Manor Road Building, Manor Road, Oxford, OX1 3UQ. email: romuald.meango@economics.ox.ac.uk.}}
\author{R. M\'eango, M. Henry, I. Mourifi\'e}
\affil{University of Oxford, Penn State, Washington University in St. Louis}
\begin{document}

\maketitle

\begin{abstract}
    Can stated preferences inform counterfactual analyses of actual choice? This research proposes a novel approach to researchers who have access to both stated choices in hypothetical scenarios and actual choices, matched or unmatched. The key idea is to use stated choices to identify the distribution of individual unobserved heterogeneity. If this unobserved heterogeneity is the source of endogeneity, the researcher can correct for its influence in a demand function estimation using actual choices and recover causal effects. Bounds on causal effects are derived in the case, where stated choice and actual choices are observed in unmatched data sets. These data combination bounds are of independent interest. We apply numerical delta method to inference for the bounds and show its good performance in a simulation experiment. 
\end{abstract}

\noindent \textbf{Keywords:} Stated and revealed preferences, data combination, optimal transport
.

\noindent \textbf{JEL codes:} C19, C35, D19, D84.

\section*{Introduction}

Social scientists use two different ways to learn about human behaviour: One possibility is to observe what people do or choose in real life, which economists call the \textit{revealed preference approach}. This approach has the advantage of reflecting actual choices. Its main limitation is that choice relevant states and observed choices are driven by common unobserved underlying factors. Researchers have relied on randomised controlled trials or natural experiments to perform causal analysis. However, randomisation is not always feasible, for technical, political or ethical reasons, and when it is feasible, it is often plagued by imperfect compliance.
The other possibility to learn about human behaviour is to ask people what they would do or choose in a hypothetical situation, in what is often called the \textit{stated preference approach}. The main advantages of this approach are that the analyst can design experiments with rich variations in choice attributes, observe individual stated choice in counterfactual scenarios and explore preferences over policies never implemented before. Moreover, the analyst can exploit hypothetical choice scenarios to tackle endogeneity arising from omitted unobserved characteristics and identify causal effects of choice attributes. Indeed, even if the relevant choice characteristics are not all observed, exogenous variations on choice characteristics manipulated in a survey experiment are sufficient for causal inference. The obvious disadvantage of the stated preference approach is summarised by \cite{train2009}: `What people say they will do is often not the same as what they actually do'. This discrepancy is known as the \textit{hypothetical bias}.

Unlike other social sciences, economics has historically seen the hypothetical bias as a disqualifying argument against the stated preference approach, and until recently, economists have, by and large, relied solely on revealed preference data, except in the measurement of the non-use value of natural or cultural resources. This attitude is changing, as observed and advocated by Orazio Attanasio's 2024 Econometric Society presidential address, published in \cite{almaas2024}.\footnote{The turn of the tide coincides with a surge of interest in new measures capturing individuals' subjective expectations as documented by \cite{manski2004} and \cite{almaas2024}. See also \cite{mcfadden2017} for a historical perspective on stated preference data.} 
Stated preference analyses are increasingly used to describe individual preferences over choice attributes that are endogenous, hard to measure in observational data or hard to vary in randomized control trials. 
An early example is \cite{juster1964}. Recent applications span many areas of economics, including education choices \citep{arcidiacono2020,delavande2019}, mobility decisions \citep{gong2022, kocsar2022, meango2022, batista2025}, health and long-term care investments \citep{kesternich2013, ameriks2020b,boyer2020}, parental investments \citep{attanasio2019,almaas2024}, marriage preference \citep{adams2019, low2024}, occupational choices \citep{maestas2023, wiswall2015, wiswall2018,meango2024}, and retirement decisions \citep{ameriks2020a, giustinelli2024}. For recent reviews, see \cite{kocsar2023} and \cite{giustinelli2023}. Despite the enthusiasm and encouraging evidence that carefully designed stated preference elicitation has predictive power for actual choices \citep{hurd2009, hainmueller2015, debresser2019, arcidiacono2020} and yields similar preference estimates as actual choice data \citep{mas2017, wiswall2018}, concerns remain regarding systematic biases of stated preference data. See, for example, \cite{murphy2005} and \cite{hausman2012} and more recently \cite{haghani2021a,haghani2021b}. 

An alternative approach consists in combining stated preference data with revealed preference data, in order to mitigate their respective shortcomings (hypothetical bias on the one hand and endogenous choice attributes on the other). However, research in this direction has so far ignored the fact that revealed and stated preferences generally occur in different data sets, in which individuals can rarely be matched. In addition, research in this direction has so far relied on very strong structural assumptions on the relation between stated and revealed preference. \cite{morikawa2002}, \cite{pantano2013}, and \cite{giustinelli2024} restrict the difference between utility parameters that govern stated and revealed preferences; \cite{vanderklaauw2012} and \cite{wiswall2021} assume that any bias between statement and action comes solely from biased information; and \cite{bernheim2022} assume that the treatment affects actual choice only through the stated preference, so that the difference between stated and revealed preferences is stable across treatment groups. \cite{briggs2020} assume additively separable and scalar unobserved heterogeneity, a specific timing of the resolution of uncertainty, and rational expectations in their identification of marginal treatment effects from subjective expectations.\footnote{\cite{athey2025combining} apply similar ideas to the combination of observational and experimental data.} 

This paper proposes to combine revealed preference data with stated preference data, matched or unmatched, to analyze binary choice with endogenous choice attributes, without strong assumptions on the way the two are related. We propose a strategy to retrieve information about individual unobserved heterogeneity from stated preferences and we derive a new result on data combination to apply this strategy in the case, where revealed and stated preferences are observed in distinct, unmatched data sets. The fundamental idea is that stated preferences are useful, not because they necessarily match actual choices, but insofar as they provide valuable information on individual heterogeneity. Suppose that an analyst asks a job-seeker about their probability to take up a particular job. The probability is elicited for different scenarios varying wage, job security, and non-wage compensation. A respondent who reports a 90 percent chance of taking up the job in all scenarios is arguably different from a respondent who states a 90 percent chance if the wage is high, and a 10 percent chance otherwise. The first respondent might have a higher taste for work, lower disposable income or lower ability than the second respondent. Irrespective of the reason, and even if `what they say is often not what they actually do', their responses reveal important information about how they differ in their preferences about job attributes. This information helps to classify individuals into unobserved heterogeneity types. The dimension of unobserved heterogeneity that can be recovered depends on the richness of the stated choice experiment. Repeated elicitation in different scenarios allows the recovery of multiple dimensions of unobserved heterogeneity. 

In our framework, agents make a binary decision~$D$ based on an endogenous decision relevant attribute~$X$. The parameter of interest is~$\mu(x):=\mathbb E[D(x)]$, where~$D(x)$ is the potential decision when the decision relevant attribute is exogenously set to~$x$. We derive conditions under which a vector of stated preferences reports~$\mathbf P$ (typically stated choice probabilities) can be used to identify the unobserved heterogeneity that causes endogeneity. More precisely, we derive conditions under which the parameter of interest is identified as~$\mu(x)=\int\mathbb E[D\vert X=x,\mathbf P=\mathbf p]\;dF_{\mathbf P}(p)$, where~$F_{\mathbf P}$ is the probability distribution of stated preference reports~$\mathbf P$.
For the common case, when actual choices~$D$ and stated choice probabilities~$\mathbf P$ are not observed in the same data set, we derive bounds on~$\mu(x)$ based on knowledge of the distributions of~$D\vert X$ from revealed preferences and of~$\mathbf P\vert X$ from stated preferences.

\cite{horowitz1995identification}, \cite{cross2002regressions} and \cite{molinari2006generalization} derive bounds on the conditional expectation~$\mathbb E[D\vert X,\mathbf P]$ based on the distributions of~$D\vert X$ and~$\mathbf P\vert X$, when~$\mathbf P$ has finite support.
Here, however, $\mathbf P$ may be discrete or continuous, and we need bounds on the integral of~$\mathbb E[D\vert X=x,\mathbf P=\mathbf p]$ over~$\mathbf p$, which involves constraints across values of~$\mathbf p$ beyond the simple bounds in \cite{horowitz1995identification}. \cite{cross2002regressions} derive bounds for the function $p\mapsto \mathbb E[D\vert X=x,P=p]$. Those bounds are functional, hence they do take into account restrictions across~$p$. More precisely, they show that the identified set for $p\mapsto \mathbb E[D\vert X=x,P=p]$ is the convex hull of a set of $J\!$ extreme points (where $J$ is the cardinality of the support of $P$). We derive simple bounds on the integral of this function over~$p$, i.e., the parameter described in section~4 of \cite{cross2002regressions}. Our bounds lend themselves to straightforward inference, and they remain valid when~$P$ is continuous (or mixed).
As such, our bounds complement the work of \citeauthor{fan2014identifying} (\citeyear{fan2014identifying,fan2016estimation}), who also generalize the bounds of \cite{cross2002regressions} and allow for mixed discrete and continuous covariates and outcomes. They derive bounds for the expectation~$\mathbb E[g(Y_d)]$, as well as the counterfactual distributions, quantile and distributional treatment effects, under unconfoundedness~$(Y_0,Y_1)\perp\!\!\!\!\perp D\vert Z$, where the distributions of~$((Y_1D+Y_0(1-D)),D)$ and~$(Z,D)$ are identified from two separate data sets.

Defining~$m(x,\mathbf p)=\mathbb E[D\vert X=x,\mathbf P=\mathbf p]$, we derive the bounds from the dual of the optimal transport problems
\begin{eqnarray*}
    \sup_F \mathbb E_F\left[\frac{\pm f_{\mathbf P}(\mathbf p)}{f_{\mathbf P\vert X=x}(\mathbf p)} \left(D-m(x,\mathbf p)\right)\vert X=x\right],
\end{eqnarray*}
where the supremum is over all joint distributions for~$(D,\mathbf P)\vert X=x$ with fixed marginals~$D\vert X=x$ and~$\mathbf P\vert X=x$. The application of ideas related to the theory of optimal transport in data combination problems can be traced back to \cite{horowitz1995identification} and \cite{heckman1997}. More recently, specific aspects of optimal transport theory were used to derive bounds in specific data combination problems in \citeauthor{fan2014identifying} (\citeyear{fan2014identifying,fan2016estimation}) and \cite{lee2019identification} for treatments effects under data combination, \cite{pacini2019two}, \citeauthor{gaillac2024linear} (\citeyear{gaillac2024linear,gaillac2025partially}) for partially linear regression and best linear prediction under data combination. The most closely related to ours is \cite{fan2025partial}, which applies the `method of conditioning' from \cite{ruschendorf1991bounds} to derive bounds on the parameter of a moment equality model under data combination. \cite{ichimura2005identification} and \cite{bontemps2025functional} use a different approach, in that they seek conditions for point identification of parameters of moment equality models despite data combination.

We show how to tighten the bounds under exclusion restrictions and we use the bootstrap procedure proposed by \cite{fang2019inference} to the bounds, as the latter are Hadamard directionally differentiable functions of two density functions and a conditional expectation. We investigate the tightness of the bounds and the performance of the inference procedure in a simulation experiment. 

The rest of the paper is organized as follows. Section~\ref{sec:identification} lays out the econometric framework and the main identification and partial identification results. Sections~\ref{sec:inference} and~\ref{sec:simulations} describe the inference procedure and the simulation experiment respectively. T
he last section concludes. 

\section{Econometric framework and identification}
\label{sec:identification}

\subsection{Model specification for revealed preferences}

Consider an economic agent with a vector~$X$ of economic attributes with support~$\mathcal X\subseteq \mathbb R^{d_x}$, who is facing binary decision~$D\in\{0,1\}$. We denote~$D(x)$ the potential decision, i.e., the decision the agent would make if their attributes were externally set to~$X=x$.
Both the vector of attributes~$X$ and the actual decision~$D:=D(X)$ are observed. Potential decision~$D(x)$ is observed only when~$X=x$ and not otherwise.
The attribute~$X$ is endogenous in the sense that~$D(x)$ is not independent of~$X$, and hence~$\mathbb E[D\vert X=x]$, which is identified from the data, is generally not equal to~$\mathbb E[D(x)]$, which is the parameter of interest. Finally, let~$\eta$ be a vector of unobserved characteristics. The support~$\mathcal H\subseteq\mathbb R^d$ of~$\eta$ is independent of~$X$. We posit that~$\eta$ contains the source of endogeneity of~$X$ in the sense of the following assumption.

\begin{assumption}
\label{ass:endogeneity}
    $( D(x) )_{x\in\mathcal X} \perp\!\!\!\!\perp X \; \vert \; \eta$.
\end{assumption}
Assumption~\ref{ass:endogeneity} is related to the control function approach pioneered by \cite{heckman1985alternative}.
Under this assumption, the structural choice function~$\mathbb E[D(x)\vert\eta]$ satisfies
\begin{eqnarray*}
    m(x,\eta) \; := \; \mathbb E[D(x)\vert \eta] \; = \; \mathbb E[D(x)\vert X=x,\eta] \; = \; \mathbb E[D\vert X=x,\eta],
\end{eqnarray*}
so that recovering~$\eta$ would allow identification of~$m(x,\eta)$ and hence of the following counterfactual choice parameters.

\begin{definition}
    The average structural choice function is defined for all~$x\in\mathcal X$ by
    \begin{eqnarray*}
        \mu(x) \; := \; \mathbb E[D(x)] \; = \; \int m(x,\eta) \; dF_\eta.
    \end{eqnarray*}
\end{definition}

\begin{example}[Retirement decision]
\label{ex:pilot}
    Our pilot example is a model of retirement decisions based on health status. The retirement decision is governed by the following model:
    \begin{eqnarray*}
        D(x) & = & g(x,\eta,\nu),
    \end{eqnarray*}
    where the choice attribute~$x$ is health status, which we assume here to be in~$\mathcal X=\{0,1\}$ for simplicity, $g$ is an unknown function, $\eta\in\mathbb R$, again for simplicity, is an unobserved trait that influences preferences, such as risk aversion or health consciousness, and~$\nu$ is a shock observed by the agent at the time of decision, but not by the analyst. Assume~$\nu\perp\!\!\!\!\perp X \; \vert \; \eta$ so that assumption~\ref{ass:endogeneity} holds.
\end{example}

\subsection{Model specification for stated preferences}

Stated preferences are data collected from a survey of the agents prior to the time of decision. Agents are asked to give an assessment of their probability of making decision~$D=1$ for some hypothetical values of~$x$. Let~$(x_1,\ldots,x_T)$ be a vector of hypothetical values~$x_t\in\mathcal X$. The agent's response to the probability elicitation question is denoted~$P_t$, $t=1,\ldots,T$.

We model~$P_t$, for each~$t=1,\ldots,T$ as
\begin{eqnarray*}
    P_t \; = \; \tilde m(x_t,\eta)
    ,
\end{eqnarray*}
where the function~$\tilde m(x,\eta)$ is called the stated choice function. In the special case where agents have rational expectations and no reporting bias, $P_t=\mathbb E[D(x)\vert \eta]$, so that~$\tilde m(x_t,\eta)=m(x_t,\eta)$ for~$t=1,\ldots,T$. In this case, the identification issue boils down into recovering  $ m(x,\eta)$ for $x \not \in (x_1,\ldots,x_T)$. In general, the stated preference function~$\tilde m(x_t,\eta)$ may differ from the structural choice function~$m(x_t,\eta)$ for revealed preferences, due to the agent's perception and/or reporting biases. 
Crucially, the agent's report~$P_t$ is driven by the same vector~$\eta$ of unobserved characteristics. This provides the only link in the model between stated and revealed preferences.

\begin{continued}[Example~\ref{ex:pilot} continued:]
    In the retirement example, agents are asked some time before retirement age what they assess their probability of retiring to be, given hypothetical health statuses. The stated preference report~$P_t$ is modeled as the expectation of an anticipated decision variable
    \begin{eqnarray*}
        \tilde D(x_t) & = & \tilde g(x_t,\eta,\nu),
    \end{eqnarray*}
    where, as in the model for revealed preferences, $\eta$ is observed by the agent but not the analyst. However, $\nu$ 
is interpreted as resolvable uncertainty. It is not known by the agent at the time of stated preference elicitation, but is revealed at the time of decision. The agent is assumed to form their responses~$P_t$ to stated preference elicitation by integrating~$\nu$, so that
    \begin{eqnarray*}
        P_t \; = \; \tilde m(x_t,\eta) \; = \;
        \int \tilde g(x_t,\eta,\nu) \; d\tilde F_{\nu\vert x_t,\eta},
    \end{eqnarray*}
    where the function~$\tilde g$ may be different from its counterpart~$g$ in the revealed preference model, and the perceived distribution~$\tilde P_{\nu\vert x_t,\eta}$ of resolvable uncertainty~$\nu$ may be different from the true distribution in the population. Note that the model above is observationally equivalent to a model stipulating~$\tilde D(x_t)=\tilde g(x_t,\eta,\tilde\nu)$, where~$\tilde\nu$, possibly different from~$\nu$, is open to multiple interpretations.
\end{continued}

\subsection{Identification of revealed preferences using stated preferences}

Identification of the parameters of interest
relies on the ability to control for unobserved heterogeneity~$\eta$ by matching on stated preference reports~~$\mathbf P=(P_1,\ldots,P_T)$. For this, we need to be able to recover unobserved heterogeneity from the stated preference reports of an individual.

\begin{assumption}\label{ass:invertibility}
    There is a subset~$(x_1,\ldots,x_d)$ of the vector of stated preference scenarios such that~$\eta\mapsto\tilde m(\eta):=(\tilde m(x_1,\eta),\ldots,\tilde m(x_d,\eta))$ is continuous and one-to-one.
\end{assumption}

Assumption~\ref{ass:invertibility} guarantees that a unique~$\eta$ can be recovered from the system~$\mathbf P: = (P_1, \ldots, P_d)= (\tilde m(x_1,\eta),\ldots,\tilde m(x_d,\eta)):= \tilde m(\mathbf x,\eta)$. An example often used in empirical applications is a multiplicatively separable log-odd model, of which the model of \cite{blass2010} is a special case. It corresponds to:
\begin{eqnarray}\label{eq:odds}
    \mathbf P & = & \Gamma \left( r(\mathbf x)+v(\mathbf x)\eta\right),
\end{eqnarray}
where~$\Gamma:\mathcal H\rightarrow[0,1]^d$ is one-to-one and~$v(\mathbf x)$ is a full rank~$d\times d$ matrix. In this case, assumption~\ref{ass:invertibility} holds with~$\eta = v(\mathbf x)^{-1} \left({\Gamma^{-1}(\mathbf P) - r(\mathbf x)}\right)$.
More generally, sufficient conditions for assumption~\ref{ass:invertibility} are given in Theorem~2 of \cite{mas1979}. In model~(\ref{eq:odds}), $\mathbf P$ has the same dimension as~$\eta$. This need not be the case for assumption~\ref{ass:invertibility} to hold, as long as the dimension of~$\mathbf P$ is at least as large as the dimension of~$\eta$. All that is needed, is that a subvector of~$\mathbf P$ identifies~$\eta$. This subvector need not be known, as long as it exists. Similarly, the stated preference function~$\tilde m$ and the true dimension of~$\eta$ need not be known. Assumption~\ref{ass:invertibility}, combined with assumption~\ref{ass:endogeneity}, ensures that~$D(x)\perp\!\!\!\!\perp X\vert \mathbf P$, which drives proposition~\ref{prop:identification ASF} below.

Assumption~\ref{ass:invertibility} allows for any kind of hypothetical bias, providing that responses to hypothetical scenarios are sufficiently different to distinguish individuals who respond differently to choice variables in actual choices. If two individuals react differently to health shocks, possibly because they have a different taste for work, then Assumption~\ref{ass:invertibility} holds if these individuals also respond differently to the hypothetical questions, one giving answers that are much more sensitive to the scenario than the other, even if both wildly overstate their willingness to work. Assumption~\ref{ass:invertibility} requires the type~$\eta$ of individuals to be fixed between scenarios and between stated and revealed preferences, and two individuals with different types to be distinguishable from their survey responses. Assumption~\ref{ass:invertibility} would be violated for instance if an individual with low taste for work were induced by social desirability bias to give the same responses to the survey as an individual with high taste for work would.

\begin{continued}[Example~\ref{ex:pilot} continued:]
    In the retirement example, we assume that the unobserved heterogeneity factor~$\eta$ is scalar. Suppose the agent perceives the relative utility of option~$D=1$ (retiring at~$65$) with health status~$x_0$ to be~$\tilde S(x_0,\eta)-\nu$. Then, the perceived choice function is~$\tilde g(x,\eta,\nu)=1\{\tilde S(x,\eta)\geq\nu\}$. If~$\nu\perp\!\!\!\!\perp \eta$, $\tilde S(x_0,\eta)$ is increasing in~$\eta$, and~$\nu$ is absolutely continuous, then~$\tilde m(x_0,\eta)=\tilde F_{\nu}(S(x_0,\eta))$ is increasing in~$\eta$ and assumption~\ref{ass:invertibility} holds. The average structural function~$\mu(x)$ is identified as follows:
    \begin{eqnarray*}
        \mu(x) \; = \; \int \mathbb E[D(x)\vert \eta] \; dF_\eta \; = \; \int \mathbb E[D\vert X=x,\eta] \; dF_\eta \; = \; \int \mathbb E[D\vert X=x,P_0] \; dF_{P_0},
    \end{eqnarray*} where the first equality is by definition, the second equality holds by assumption~\ref{ass:endogeneity} and the third equality holds by assumption~\ref{ass:invertibility}. Note also that when~$\eta\mapsto\tilde m(x_0,\eta)$ is increasing, it is identified and~$\eta$ can be recovered as~$\eta=\tilde m^{-1}(x_0,P_0)$.
\end{continued}

As illustrated in the example, under assumption~\ref{ass:invertibility}, stated preferences allow us to control for unobserved heterogeneity in the regression of decision~$D$ on the choice attributes~$x$, and to identify the parameters of interest.

\begin{proposition}[Identification of the ASF]\label{prop:identification ASF}
    Under assumptions~\ref{ass:endogeneity}  and~\ref{ass:invertibility}, the average structural function is identified as~$\mu(x) = \int \mathbb E[ D\vert X=x,\mathbf P=\mathbf p ] \; dF_{\mathbf P}$, where~$\mathbf P=(P_1,\ldots,P_T)$.
\end{proposition}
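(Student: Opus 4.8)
The plan is to show that Assumptions~\ref{ass:endogeneity} and~\ref{ass:invertibility} together imply the conditional independence $(D(x))_{x\in\mathcal X}\perp\!\!\!\!\perp X\mid\mathbf P$, and then to integrate the structural choice function against the law of $\mathbf P$. First I would use Assumption~\ref{ass:invertibility}: let $(x_1,\ldots,x_d)$ be the subvector of scenarios for which $\eta\mapsto\tilde m(\eta)=(\tilde m(x_1,\eta),\ldots,\tilde m(x_d,\eta))$ is continuous and one-to-one. Then the subvector $(P_1,\ldots,P_d)=\tilde m(\eta)$ of $\mathbf P$ is a measurable bijective image of $\eta$, so $\eta$ and $(P_1,\ldots,P_d)$ generate the same $\sigma$-algebra; in particular $\sigma(\eta)\subseteq\sigma(\mathbf P)$ and $\eta$ is a measurable function of $\mathbf P$. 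Conversely, since each $P_t=\tilde m(x_t,\eta)$, the whole vector $\mathbf P$ is a measurable function of $\eta$, so $\sigma(\mathbf P)\subseteq\sigma(\eta)$. Hence $\sigma(\mathbf P)=\sigma(\eta)$.

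Next I would transfer the conditional independence in Assumption~\ref{ass:endogeneity} from the conditioning $\sigma$-algebra $\sigma(\eta)$ to $\sigma(\mathbf P)$. Because these two $\sigma$-algebras coincide, conditioning on $\eta$ and conditioning on $\mathbf P$ are the same operation, so $(D(x))_{x\in\mathcal X}\perp\!\!\!\!\perp X\mid\eta$ is literally the statement $(D(x))_{x\in\mathcal X}\perp\!\!\!\!\perp X\mid\mathbf P$. From this I would deduce, for each fixed $x\in\mathcal X$,
\begin{eqnarray*}
 m(x,\eta) \;=\; \mathbb E[D(x)\mid\eta] \;=\; \mathbb E[D(x)\mid\mathbf P] \;=\; \mathbb E[D(x)\mid X=x,\mathbf P] \;=\; \mathbb E[D\mid X=x,\mathbf P],
\end{eqnarray*}
where the second equality uses $\sigma(\eta)=\sigma(\mathbf P)$, the third uses the conditional independence just established (so conditioning additionally on $\{X=x\}$ does not change the conditional expectation), and the last uses $D=D(X)=D(x)$ on the event $\{X=x\}$. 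Every term on the right is a function of observables and is therefore identified.

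Finally I would integrate. By the definition of the average structural function and the law of iterated expectations applied with the conditioning variable $\mathbf P$ (equivalently $\eta$),
\begin{eqnarray*}
 \mu(x) \;=\; \mathbb E[D(x)] \;=\; \mathbb E\big[\,\mathbb E[D(x)\mid\mathbf P]\,\big] \;=\; \int \mathbb E[D\mid X=x,\mathbf P=\mathbf p]\; dF_{\mathbf P}(\mathbf p),
\end{eqnarray*}
using the displayed chain of equalities in the integrand. This is the claimed identification formula.

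The main obstacle, and the only point requiring care, is the measure-theoretic step that $\eta$ being a continuous one-to-one image of a subvector of $\mathbf P$ makes $\eta$ a measurable function of $\mathbf P$ and hence yields $\sigma(\mathbf P)=\sigma(\eta)$: a continuous injection on a subset of Euclidean space has a Borel-measurable inverse on its image (e.g. by the Lusin--Souslin theorem, or more elementarily because such a map is a homeomorphism onto its image when the domain is $\sigma$-compact), so $\eta=\tilde m^{-1}(P_1,\ldots,P_d)$ is a genuine measurable function. One should also note the harmless technical point that the conditioning events $\{X=x\}$ and $\{\mathbf P=\mathbf p\}$ may individually have probability zero, so the equalities involving $\mathbb E[\cdot\mid X=x,\mathbf P=\mathbf p]$ are to be read as statements about regular conditional expectations holding $F_{\mathbf P}$-almost surely, which is exactly what is needed for the integral representation of $\mu(x)$.
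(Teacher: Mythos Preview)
Your proof is correct and follows essentially the same route as the paper: establish $D(x)\perp\!\!\!\!\perp X\mid\mathbf P$ from Assumption~\ref{ass:endogeneity} via Assumption~\ref{ass:invertibility}, then run the chain $\mu(x)=\mathbb E[D(x)]=\int\mathbb E[D(x)\mid\mathbf P]\,dF_{\mathbf P}=\int\mathbb E[D(x)\mid X=x,\mathbf P]\,dF_{\mathbf P}=\int\mathbb E[D\mid X=x,\mathbf P]\,dF_{\mathbf P}$. The paper's proof simply asserts the transfer of conditional independence in one line, whereas you justify it explicitly by showing $\sigma(\mathbf P)=\sigma(\eta)$ and invoking Borel measurability of the inverse; this is additional rigor rather than a different argument.
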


\begin{proof}[Proof of proposition~\ref{prop:identification ASF}]
    Under assumption~\ref{ass:invertibility}, $D(x)\perp X\vert \eta \Rightarrow D(x)\perp X\vert \mathbf P$. Then, we have
\begin{eqnarray*}
    \mu(x) & = & \mathbb E[D(x)] \\
    & = & \int \mathbb E[D(x)\vert \mathbf P=\mathbf p]\;dF_{\mathbf P}(\mathbf p) \\
    & = & \int \mathbb E[D(x)\vert X=x, \mathbf P=\mathbf p]\;dF_{\mathbf P}(\mathbf p) \\
    & = & \int \mathbb E[D\vert X=x, \mathbf P=\mathbf p]\;dF_{\mathbf P}(\mathbf p),
\end{eqnarray*}
which completes the proof.
\end{proof}

Proposition~\ref{prop:identification ASF} provides a constructive identification result. Inference on~$\mu(x)$ can be based on the sample average of any estimator of the conditional expectation~$\mathbb E[D\vert X=x,\mathbf P=\mathbf p]$. 


\subsection{Bounds under data combination}

In most empirical environments, stated preferences and revealed preferences are collected in different data sets, and it is difficult or impossible to match individuals from the two data sets. Hence, the conditional distribution~$F_{D\vert X}$ of~$D$ given~$X$ is identified from the revealed preference data set, and the conditional distribution~$F_{\mathbf P\vert X}$ of~$\mathbf P$ given~$X$ is identified from the stated preference data set. However, the conditional distribution~$F_{D,\mathbf P\vert X}$ of~$(D,\mathbf P)$ given~$X$ is not point identified. It can take any value in the set~$\mathcal M(F_{D\vert X=x},F_{\mathbf P\vert X=x})$, where~$\mathcal M(F,F^\prime)$ is the set of joint distributions with marginals~$F$ and~$F^\prime$.

For each conditional distribution~$F_{D,\mathbf P\vert X}$ for~$(D,\mathbf P)$ given~$X$, proposition~\ref{prop:identification ASF} identifies a unique value for the structural function~$\mathbb E[D\vert X,\mathbf P]$ and for the average structural function~$\mu(x)$. However, since there are multiple distributions~$F_{D,\mathbf P\vert X}$ compatible with the marginals~$F_{D\vert X}$ and~$F_{\mathbf P\vert X}$, both~$\mathbb E[D\vert X,\mathbf P]$ and~$\mu(x)$ are partially identified. 

\subsubsection{Sharp bounds on the structural function}
We first derive sharp bounds for the conditional expectation~$\mathbb E[D\vert X=x,\mathbf P=\mathbf p]$. Under assumption~\ref{ass:invertibility}, there is a one-to-one mapping between~$\mathbf P$ and~$\eta$. Hence, we abuse notation and call~$m(x,\mathbf p)=\mathbb E[D\vert X=x,\mathbf P=\mathbf p]$ the structural function. The sharp identified set for the structural function~$m(x,\mathbf p)$ contains all the conditional expectations~$\mathbb E[D\vert X=x,\mathbf P=\mathbf p]$ compatible with marginal distributions~$F_{D\vert X}$ and~$F_{\mathbf P\vert X}$.

\begin{definition}
    For each~$x\in\mathcal X$, the sharp identified set for~$\mathbf p\mapsto m(x,\mathbf p)$ is the set of~$\mathbb E[D\vert X=x,\mathbf P=\mathbf p]$ such that~$(D,\mathbf P)\vert X=x$ has distribution in~$\mathcal M(F_{D\vert X=x},F_{\mathbf P\vert X=x})$ and~$\mathbf p\mapsto \mathbb E[D\vert X=x,\mathbf P=\mathbf p]$ is continuous.
\end{definition}

We first give a characterization of the identified set in the form of sharp bounds on~$m(x,\mathbf p)$.
By definition of~$m(x,\mathbf p)$, we have~
$\mathbb E[D-m(X,\mathbf P)\vert X=x,\mathbf P=\mathbf p]=0,$
which is equivalent to
\begin{eqnarray}\label{eq:Bierens}
    \mathbb E[h(\mathbf P)(D-m(X,\mathbf P))\vert X=x] & = & 0,
\end{eqnarray}
for all continuous and integrable functions~$h:\mathbb R^T\rightarrow\mathbb R$. The existence of a distribution in~$\mathcal M(F_{D\vert X},F_{\mathbf P\vert X})$ such that~(\ref{eq:Bierens}) holds is equivalent to
\begin{eqnarray}\label{eq:primal}
    \min_{F}
    \mathbb E_F[h(\mathbf P)(D-m(X,\mathbf P))\vert X] \; \leq \; 0 \; \leq \; \max_{F}
    \mathbb E_F[h(\mathbf P)(D-m(X,\mathbf P))\vert X=x],
\end{eqnarray}
where the optimization is over~$F\in\mathcal M(F_{D\vert X=x},F_{\mathbf P\vert X=x})$, and~$h$ is any continuous integrable function of~$\mathbf p$. The bounds are obtained through an application of optimal transport duality to the optimal transport problems on the left and right-hand sides of~(\ref{eq:primal}).
This yields the following proposition.

\begin{proposition}[Sharp bounds on the structural function]
\label{prop:sharp}
A continuous function~$m(x,\mathbf p)$ is in the identified set if and only if for all continuous and integrable function~$h$, we have:
\begin{eqnarray*}
    &&\sup_{\varphi\in\mathbb R} \left\{\mathbb E[\min(0,h(\mathbf P)-\varphi)\vert X=x]+\varphi\mathbb E[D\vert X=x]\right\} \\
    && \hskip50pt \leq \; \mathbb E[h(\mathbf P)m(X,\mathbf P)\vert X=x] \\
    && \hskip100pt \leq \; 
    \inf_{\varphi\in\mathbb R} \left\{\mathbb E[\max(0,h(\mathbf P)+\varphi)\vert X=x]-\varphi\mathbb E[D\vert X=x]\right\}.
\end{eqnarray*}
\end{proposition}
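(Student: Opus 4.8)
The plan is to establish the two-sided inequality by applying Kantorovich--Rubinstein-type optimal transport duality to each of the two extremal problems in display~(\ref{eq:primal}), and then to argue that ranging over all continuous integrable $h$ makes the resulting family of constraints both necessary and sufficient for membership in the identified set. First I would fix $x\in\mathcal X$ and a continuous integrable $h$, and rewrite the inner objective. Since $D\in\{0,1\}$ and the marginal of $D\vert X=x$ is fixed (it is Bernoulli with mean $\mathbb E[D\vert X=x]$), and since $m(X,\mathbf P)$ with $X=x$ is just the fixed function $\mathbf p\mapsto m(x,\mathbf p)$ of the coordinate $\mathbf P$, the term $\mathbb E_F[h(\mathbf P)m(x,\mathbf P)\vert X=x]$ does not depend on the coupling $F$ — only on the fixed marginal of $\mathbf P\vert X=x$. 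Hence maximizing (resp. minimizing) $\mathbb E_F[h(\mathbf P)(D-m(x,\mathbf P))\vert X=x]$ over $F\in\mathcal M(F_{D\vert X=x},F_{\mathbf P\vert X=x})$ reduces to maximizing (resp. minimizing) $\mathbb E_F[h(\mathbf P)D\vert X=x]$ over couplings with the stated marginals, after which one subtracts the constant $\mathbb E[h(\mathbf P)m(x,\mathbf P)\vert X=x]$.

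The core computation is therefore the value of $\sup_F \mathbb E_F[h(\mathbf P)\,D\vert X=x]$ with $D$ Bernoulli$(q)$, $q=\mathbb E[D\vert X=x]$, independent marginal structure relaxed to all couplings. I would invoke optimal transport duality (finite on both sides because $h$ is integrable and $D$ is bounded): the dual of $\sup_F\mathbb E_F[h(\mathbf P)D]$ with fixed marginals is $\inf\{\mathbb E[\psi(\mathbf P)]+\mathbb E[\chi(D)]\}$ over $\psi,\chi$ with $\psi(\mathbf p)+\chi(d)\ge d\,h(\mathbf p)$. Because $D$ takes only the values $0$ and $1$, the constraint collapses to $\psi(\mathbf p)\ge \chi(1)^{-}\!$-type bookkeeping; writing $\varphi=\chi(1)-\chi(0)$ and optimizing out $\chi(0)$, the dual value becomes $\inf_{\varphi\in\mathbb R}\{\mathbb E[\max(0,h(\mathbf P)-\varphi)\vert X=x]+\varphi\,q\}$ after a sign normalization. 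Substituting back $h\mapsto -h$ (or, equivalently, reading the lower problem directly) gives the matching lower expression with $\min(0,\cdot)$ and the sign of the $\varphi q$ term flipped. Rearranging the constant $\mathbb E[h(\mathbf P)m(x,\mathbf P)\vert X=x]$ to the middle then yields exactly the two displayed inequalities; I would double-check the sign conventions on $\varphi$ so the $\max$/$\min$ and the $\pm\varphi\mathbb E[D\vert X=x]$ terms line up as written.

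The remaining — and I expect the main obstacle — step is sharpness: showing that if a continuous $m(x,\cdot)$ satisfies the displayed inequalities for \emph{every} continuous integrable $h$, then there is a genuine coupling $F\in\mathcal M(F_{D\vert X=x},F_{\mathbf P\vert X=x})$ under which $\mathbb E_F[D\vert X=x,\mathbf P=\mathbf p]=m(x,\mathbf p)$ $F_{\mathbf P\vert X=x}$-a.s. The inequalities say precisely that $0$ lies between the min and the max in~(\ref{eq:primal}) for each $h$; one then needs to pass from ``for each $h$ there is a coupling making the $h$-moment vanish'' to ``there is a single coupling making all such moments vanish simultaneously,'' i.e.\ that the moment-equality system~(\ref{eq:Bierens}) over the (separable) test class is consistent on the weakly compact, convex set $\mathcal M(F_{D\vert X=x},F_{\mathbf P\vert X=x})$. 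I would handle this with a minimax/separating-hyperplane argument: the map $F\mapsto\big(\mathbb E_F[h_k(\mathbf P)(D-m(x,\mathbf P))\vert X=x]\big)_k$ is linear and weak-$*$ continuous on a weakly compact convex set, its image is convex, and it contains $0$ in each coordinate direction; if $0$ were not in the (closed) image one could strictly separate it by a finite linear combination $\sum c_k h_k$, contradicting the one-dimensional solvability for $h=\sum c_k h_k$. Continuity of the resulting regression function $m(x,\cdot)$ — the one regularity condition imposed in the definition of the identified set — is what restricts the test functions to continuous $h$ and is the delicate point to state carefully; I would either cite the relevant Bierens-type completeness argument or note that equality~(\ref{eq:Bierens}) for all continuous integrable $h$ is equivalent, under continuity of $m(x,\cdot)$, to the conditional-mean identity $\mathbb E_F[D\vert X=x,\mathbf P=\mathbf p]=m(x,\mathbf p)$ a.s.
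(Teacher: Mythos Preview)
Your proposal is correct and follows essentially the same route as the paper: apply Kantorovich duality to the transport problems in~(\ref{eq:primal}) and exploit $D\in\{0,1\}$ to reduce the dual potential to a single scalar~$\varphi$; your preliminary step of splitting off $\mathbb E[h(\mathbf P)m(x,\mathbf P)\vert X=x]$ (which depends only on the fixed $\mathbf P$-marginal) is a cosmetic reorganization of the same algebra. On the sufficiency (sharpness) direction you are in fact more careful than the paper, which simply asserts the equivalence of~(\ref{eq:Bierens}) and~(\ref{eq:primal}) for all~$h$ without supplying the convexity/separation argument you outline.
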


\begin{proof}[Proof of proposition~\ref{prop:sharp}]
    Fix~$x\in\mathcal X$. The existence of a distribution in~$\mathcal M(F_{D\vert X},F_{\mathbf P\vert X})$ such that~(\ref{eq:Bierens}) holds is equivalent to~(\ref{eq:primal}).
Both left and right-hand-sides of~(\ref{eq:primal}) are solutions to optimal transport programs. By Theorem~1.3 in \cite{villani2021topics}, the left-hand inequality is equivalent to the dual expression:
\begin{eqnarray}\label{eq:dual}
    \sup_{\varphi,\psi}\;\mathbb E[\varphi(D)\vert X=x] + \mathbb E[\psi(\mathbf P)\vert X=x] & \leq & 0,
\end{eqnarray}
where the supremum if over all integrable functions~$(\varphi,\psi)$ such that
\begin{eqnarray}\label{eq:constraint}
    \forall (d,\mathbf p), \;\; \varphi(d)+\psi(\mathbf p) \; \leq \; h(\mathbf p)(d-m(x,\mathbf p)).
\end{eqnarray}
Since~$D\in\{0,1\}$, calling~$\varphi:=\varphi(1)$, the constraint~(\ref{eq:constraint}) yields
\begin{eqnarray*}
    \psi(\mathbf p) & = & -h(\mathbf p)m(x,\mathbf p)+\min \left( 0,h(\mathbf p)-\varphi \right).
\end{eqnarray*}
The latter can be plugged back into~(\ref{eq:dual}) to yield the bound:
\begin{eqnarray*}
    \mathbb E[h(\mathbf P)m(X,\mathbf P)\vert X=x] & \geq &
    \sup_{\varphi\in\mathbb R} \left\{\mathbb E[\max(0,h(\mathbf P)-\varphi)\vert X=x]+\varphi\mathbb E[D\vert X=x]\right\}.
\end{eqnarray*}
The right-hand side of~(\ref{eq:primal}) can be treated similarly to yield the symmetric upper bound.
\end{proof}

Although proposition~\ref{prop:sharp} is of interest in its own right, we present it here as a means to an end. A simple corollary provides bounds on the average structural function. Those bounds are simple and lend themselves to inference.

\subsubsection{Bounds on the average structural function}

To see how the bounds on the average structural function~$\mu(x)$ are derived, notice that
\begin{eqnarray*}
    \mu(x) & = & \int m(x,\mathbf p)\; dF_{\mathbf P}
    \; = \; \int m(x,\mathbf p) \; \frac{f_{\mathbf P}(\mathbf p)}{f_{\mathbf P\vert X=x}(\mathbf p)}\; dF_{\mathbf P\vert X=x} \; = \; \mathbb E[h(\mathbf P)m(X,\mathbf P)\vert X=x],
\end{eqnarray*}
with~$h(\mathbf p):=f_{\mathbf P}(\mathbf p)/f_{\mathbf P\vert X=x}(\mathbf p)$, when the densities are well defined. Hence, proposition~\ref{prop:sharp} yields immediately the following bounds on the average structural function.

\begin{corollary}\label{cor:bounds}
    Under assumptions~\ref{ass:endogeneity} and~\ref{ass:invertibility}, the average structural function satisfies
    \begin{eqnarray*}
        && \sup_\varphi \left\{ \int \min(\varphi f_{\mathbf P\vert X=x}(\mathbf p),f_{\mathbf P}(\mathbf p))\;d\mathbf p -\varphi\mathbb E[1-D\vert X=x]\right\}
        \\ && \hskip40pt \leq \; \mu(x) \; \leq \;
        \inf_\varphi \left\{ \int \max(-\varphi f_{\mathbf P\vert X=x}(\mathbf p),f_{\mathbf P}(\mathbf p))\;d\mathbf p +\varphi\mathbb E[1-D\vert X=x]\right\},
    \end{eqnarray*}
    where~$f_{\mathbf P}$ and~$f_{\mathbf P\vert X}$ denote either count densities or Lebesgue densities depending on whether~$\mathbf P$ is discrete or continuous.
\end{corollary}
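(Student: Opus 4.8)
The plan is to obtain Corollary~\ref{cor:bounds} as a direct specialization of Proposition~\ref{prop:sharp}. The starting point, already recorded in the display preceding the corollary, is that
\[
\mu(x) \;=\; \int m(x,\mathbf p)\,\frac{f_{\mathbf P}(\mathbf p)}{f_{\mathbf P\vert X=x}(\mathbf p)}\,dF_{\mathbf P\vert X=x} \;=\; \mathbb E\bigl[h(\mathbf P)m(X,\mathbf P)\bigm\vert X=x\bigr]
\]
for the particular choice $h(\mathbf p):=f_{\mathbf P}(\mathbf p)/f_{\mathbf P\vert X=x}(\mathbf p)$. This $h$ is the ratio of the two (count or Lebesgue) densities, well defined and finite on the support of $F_{\mathbf P\vert X=x}$ whenever $F_{\mathbf P}$ is dominated by $F_{\mathbf P\vert X=x}$; under the regularity maintained for the densities it is continuous, and it is integrable because $\mathbb E[h(\mathbf P)\vert X=x]=\int f_{\mathbf P}(\mathbf p)\,d\mathbf p=1$ (and then $|h\,m|\le h$ is integrable too since $m\in[0,1]$). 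The true structural function lies in the identified set, being generated by a joint law with the prescribed marginals, so Proposition~\ref{prop:sharp} applies with this $h$ and delivers the two inequalities bracketing $\mathbb E[h(\mathbf P)m(X,\mathbf P)\vert X=x]=\mu(x)$. It then only remains to rewrite those bounds in the stated form; note this uses only the necessity (``only if'') direction of Proposition~\ref{prop:sharp}, so the corollary asserts outer bounds and not their sharpness for $\mu(x)$.

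The rewriting is purely algebraic. For the lower bound I would expand the conditional expectation as an integral against $f_{\mathbf P\vert X=x}$ and use $f_{\mathbf P\vert X=x}(\mathbf p)\ge 0$ to move the density inside the minimum, so that
\[
\mathbb E\bigl[\min(0,h(\mathbf P)-\varphi)\bigm\vert X=x\bigr] \;=\; \int \min\bigl(0,\,f_{\mathbf P}(\mathbf p)-\varphi f_{\mathbf P\vert X=x}(\mathbf p)\bigr)\,d\mathbf p .
\]
Applying the elementary identity $\min(0,a-b)=\min(b,a)-b$ with $a=f_{\mathbf P}(\mathbf p)$ and $b=\varphi f_{\mathbf P\vert X=x}(\mathbf p)$, and using $\int \varphi f_{\mathbf P\vert X=x}(\mathbf p)\,d\mathbf p=\varphi$, turns the bracketed expression of Proposition~\ref{prop:sharp} into $\int\min(\varphi f_{\mathbf P\vert X=x}(\mathbf p),f_{\mathbf P}(\mathbf p))\,d\mathbf p-\varphi+\varphi\,\mathbb E[D\vert X=x]$; collecting the $\varphi$-terms via $\varphi\,\mathbb E[D\vert X=x]-\varphi=-\varphi\,\mathbb E[1-D\vert X=x]$ gives exactly the lower bound in the corollary. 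The upper bound is obtained symmetrically, replacing the minimum identity by $\max(0,a+b)=\max(-b,a)+b$.

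Finally I would observe that the discrete case needs no separate treatment: when $\mathbf P$ has discrete support, $f_{\mathbf P}$ and $f_{\mathbf P\vert X=x}$ are read as count densities, every integral above becomes a sum over the support, the continuity requirement on $h$ in Proposition~\ref{prop:sharp} is vacuous, and the same algebra goes through verbatim. I do not expect a real obstacle in this argument; the only delicate point is the well-definedness, continuity, and integrability of the ratio $h$ --- i.e., domination of $F_{\mathbf P}$ by $F_{\mathbf P\vert X=x}$ together with enough smoothness of the densities --- which is precisely the qualification ``when the densities are well defined'' built into the statement. Everything else is bookkeeping layered on top of Proposition~\ref{prop:sharp}.
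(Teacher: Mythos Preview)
Your proposal is correct and follows exactly the paper's approach: the paper simply notes that $\mu(x)=\mathbb E[h(\mathbf P)m(X,\mathbf P)\vert X=x]$ with $h(\mathbf p)=f_{\mathbf P}(\mathbf p)/f_{\mathbf P\vert X=x}(\mathbf p)$ and states that Proposition~\ref{prop:sharp} ``yields immediately'' the bounds. You supply the algebraic rewriting (via the identities $\min(0,a-b)=\min(b,a)-b$ and $\max(0,a+b)=\max(-b,a)+b$, together with $\int f_{\mathbf P\vert X=x}=1$) that the paper leaves implicit, and your observations on integrability, the discrete case, and the non-sharpness caveat are all accurate additions.
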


The bounds of corollary~\ref{cor:bounds} are best understood in the special case, where~$\mathbf P$ takes only two values~$\mathbf P\in\{\underline{\mathbf p},\bar{\mathbf p}\}$. Hence, there are only two types in the population. In that case, the average structural function can be written in the following way:
\begin{eqnarray*}
    \mu(x) & = & \int \mathbb E[D(x)\vert \mathbf P=p]dF_{\mathbf P} \\
    & = & \mathbb P[D=1\vert X=x,\mathbf P=\underline{\mathbf p}]\mathbb P(\mathbf P=\underline{\mathbf p})
    + \mathbb P[D=1\vert X=x,\mathbf P=\bar{\mathbf p}]\mathbb P(\mathbf P=\bar{\mathbf p}).
\end{eqnarray*}

Compare the previous expression with the identified expectation:
\begin{eqnarray*}
    \mathbb E[D\vert X=x] & = & \mathbb P[D=1\vert X=x,\mathbf P=\underline{\mathbf p}]\mathbb P(\mathbf P=\underline{\mathbf p}\vert X=x) \\
    &&\hskip75pt + \; \mathbb P[D=1\vert X=x,\mathbf P=\bar{\mathbf p}]\mathbb P(\mathbf P=\bar{\mathbf p}\vert X=x).
\end{eqnarray*}

In the expressions above, all the terms are identified except~$\mu(x)$ (the parameter of interest) and~$\mathbb P[D=1\vert X=x,\mathbf P=\underline{\mathbf p}]$, $\mathbb P[D=1\vert X=x,\mathbf P=\bar{\mathbf p}]$. The bounds are attained when one of the latter takes an extreme value. 
As explained in \cite{cross2002regressions}, the two extreme values for the pair~$(\mathbb P[D=1\vert X=x,\mathbf P=\underline{\mathbf p}],\mathbb P[D=1\vert X=x,\mathbf P=\bar{\mathbf p}])$ are
\begin{eqnarray*}
    \mathbb P[D=1\vert X=x,\mathbf P=\underline{\mathbf p}] & = & \underline{\Psi}_{\underline{\mathbf p}} \; := \; \max\left(0, \frac{\mathbb E[D=1\vert X=x]-\mathbb P(\mathbf P=\bar{\mathbf p}\vert X=x)}{\mathbb P(\mathbf P=\underline{\mathbf p}\vert X=x)}\right)\\
    \mathbb P[D=1\vert X=x,\mathbf P=\bar{\mathbf p}] & = &  \bar{\Psi}_{\bar{\mathbf p}} \; := \; \min\left(1, \frac{\mathbb E[D=1\vert X=x]}{\mathbb P(\mathbf P=\bar{\mathbf p}\vert X=x)}\right)
\end{eqnarray*}
obtained when all type~$\underline{\mathbf p}$ individuals choose~$D=0$ or all type~$\bar{\mathbf p}$ individuals choose~$D=1$, and
\begin{eqnarray*}
    \mathbb P[D=1\vert X=x,\mathbf P=\underline{\mathbf p}] & = &  \bar{\Psi}_{\underline{\mathbf p}} \; := \; \min\left(1, \frac{\mathbb E[D=1\vert X=x]}{\mathbb P(\mathbf P=\underline{\mathbf p}\vert X=x)}\right)\\
    \mathbb P[D=1\vert X=x,\mathbf P=\bar{\mathbf p}] & = &  \underline{\Psi}_{\bar{\mathbf p}} \; := \; \max\left(0, \frac{\mathbb E[D=1\vert X=x]-\mathbb P(\mathbf P=\underline{\mathbf p}\vert X=x)}{\mathbb P(\mathbf P=\bar{\mathbf p}\vert X=x)}\right)
\end{eqnarray*}
obtained when all type~$\underline{\mathbf p}$ individuals choose~$D=1$ or all type~$\bar{\mathbf p}$ individuals choose~$D=0$.

The resulting bounds on~$\mu(x)$ are
\begin{eqnarray*}
    \begin{array}{lll}
       \mathbb P(\mathbf P=\bar{\mathbf p}) \underline{\Psi}_{\bar{\mathbf p}} +\mathbb P(\mathbf P=\underline{\mathbf p}) \bar{\Psi}_{\underline{\mathbf p}} & \mbox{and} & \mathbb P(\mathbf P=\bar{\mathbf p}) \bar{\Psi}_{\bar{\mathbf p}} +\mathbb P(\mathbf P=\underline{\mathbf p}) \underline{\Psi}_{\underline{\mathbf p}}.
    \end{array}
\end{eqnarray*}

These bounds can be obtained directly from the simple expression in corollary~\ref{cor:bounds}, which remain valid for general~$\mathbf P$, including mixed discrete and continuous, and lends itself to statistical inference.



\subsubsection{Exclusion restrictions}

The bounds of corollary~\ref{cor:bounds} can be tightened for more informative inference under additional covariates and exclusion restrictions. Suppose the vector of observable characteristics is now~$(X,Z)$, where~$X$ is the vector of choice attributes, and~$Z$ is a vector of additional covariates. Define potential decision~$D(x,z)$ as before, and extend the conditional independence assumption to~$(X,Z)$:
\begin{assumptionp}{\ref{ass:endogeneity}$'$}
\label{ass:endogeneity'}
    $( D(x,z) )_{(x,z)\in\mathcal X\times\mathcal Z} \perp\!\!\!\!\perp (X, Z) \; \vert \; \eta$.
\end{assumptionp}

Assume that covariate~$Z$ is excluded in the sense that~$Z$ doesn't affect potential outcome~$D(x)$ on average. 
\begin{assumption}[Exclusion restriction]\label{ass:exclusion}
    For all~$(x,z)\in\mathcal X\times\mathcal Z$, $\mathbb E[D(x,z)]=\mathbb E[D(x)]$.
\end{assumption}

\begin{corollary}\label{cor:exclusion}
    Under assumption~\ref{ass:endogeneity'}, \ref{ass:invertibility}, and~\ref{ass:exclusion}, the average structural function satisfies
\begin{eqnarray*}
    && \sup_{z\in\mathcal Z}\sup_{\varphi\in\mathbb R} \left\{ \int \min(\varphi f_{\mathbf P\vert X=x, Z=z}(\mathbf p),f_{\mathbf P}(\mathbf p))\;d\mathbf p -\varphi\mathbb E[1-D\vert X=x,Z=z]\right\}
    \\ && \hskip5pt \leq \; \mu(x) \; \leq \;
    \inf_{z\in\mathcal Z}\inf_{\varphi\in\mathbb R} \left\{ \int \max(-\varphi f_{\mathbf P\vert X=x,Z=z}(\mathbf p),f_{\mathbf P}(\mathbf p))\;d\mathbf p +\varphi\mathbb E[1-D\vert X=x,Z=z]\right\},
\end{eqnarray*}
where~$f_{\mathbf P}$ and~$f_{\mathbf P\vert X}$ denote either count densities or Lebesgue densities depending on whether~$\mathbf P$ is discrete or continuous.
\end{corollary}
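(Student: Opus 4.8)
The plan is to run the identification-plus-transport argument underlying proposition~\ref{prop:identification ASF} and corollary~\ref{cor:bounds} separately for each value of the excluded covariate~$z$, and then intersect the resulting family of bounds, the intersection being legitimated by the exclusion restriction.

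First I would fix~$z\in\mathcal Z$ and show that~$\mu(x)=\int\mathbb E[D\vert X=x,Z=z,\mathbf P=\mathbf p]\,dF_{\mathbf P}(\mathbf p)$. As in the proof of proposition~\ref{prop:identification ASF}, assumption~\ref{ass:invertibility} gives~$\sigma(\mathbf P)=\sigma(\eta)$, so assumption~\ref{ass:endogeneity'} implies~$(D(x,z))_{(x,z)\in\mathcal X\times\mathcal Z}\perp\!\!\!\!\perp(X,Z)\vert\mathbf P$; hence
\begin{align*}
\mathbb E[D(x,z)] &= \int\mathbb E[D(x,z)\vert\mathbf P=\mathbf p]\,dF_{\mathbf P}(\mathbf p) \\
&= \int\mathbb E[D(x,z)\vert X=x,Z=z,\mathbf P=\mathbf p]\,dF_{\mathbf P}(\mathbf p) \\
&= \int\mathbb E[D\vert X=x,Z=z,\mathbf P=\mathbf p]\,dF_{\mathbf P}(\mathbf p),
\end{align*}
the last equality because~$D=D(X,Z)$ coincides with~$D(x,z)$ on~$\{X=x,Z=z\}$. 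The exclusion restriction (assumption~\ref{ass:exclusion}) identifies the left-hand side with~$\mu(x)=\mathbb E[D(x)]$, so the displayed identity holds for every~$z$.

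Next I would replay proposition~\ref{prop:sharp} and the computation leading to corollary~\ref{cor:bounds}, with the conditioning event~$\{X=x,Z=z\}$ in place of~$\{X=x\}$ and the weight~$h(\mathbf p):=f_{\mathbf P}(\mathbf p)/f_{\mathbf P\vert X=x,Z=z}(\mathbf p)$. Under data combination, $F_{D\vert X=x,Z=z}$ is identified from the revealed-preference sample and~$F_{\mathbf P\vert X=x,Z=z}$ (as well as the unconditional~$f_{\mathbf P}$) from the stated-preference sample, so the optimal transport problem over~$\mathcal M(F_{D\vert X=x,Z=z},F_{\mathbf P\vert X=x,Z=z})$ and its dual are exactly those used in the proof of proposition~\ref{prop:sharp}, the enlarged conditioning set being immaterial since that proof invokes only the two fixed marginals. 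Writing~$m_z(x,\mathbf p):=\mathbb E[D\vert X=x,Z=z,\mathbf P=\mathbf p]$, the first step gives~$\mathbb E[h(\mathbf P)m_z(X,\mathbf P)\vert X=x,Z=z]=\int m_z(x,\mathbf p)\,dF_{\mathbf P}(\mathbf p)=\mu(x)$; substituting~$h$ and simplifying the~$\min$/$\max$ exactly as in the passage deriving corollary~\ref{cor:bounds} then yields, for each fixed~$z$, the two inequalities in the statement but with the outer~$\sup_{z}$ and~$\inf_{z}$ deleted.

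Finally, since these inequalities hold simultaneously for every~$z\in\mathcal Z$, one maximises the lower bound and minimises the upper bound over~$z$, which is the claim. The only delicate point I anticipate is bookkeeping rather than analysis: making sure the density ratio~$h$ is well defined (absolute continuity of~$F_{\mathbf P}$ with respect to~$F_{\mathbf P\vert X=x,Z=z}$, absorbed by the ``count density or Lebesgue density'' convention in the statement) and confirming that proposition~\ref{prop:sharp}, phrased with conditioning on~$X$ alone, transfers verbatim when the conditioning set is~$(X,Z)$. As with corollary~\ref{cor:bounds}, no sharpness is being claimed — these are valid bounds from one choice of~$h$ per~$z$, intersected over~$z$.
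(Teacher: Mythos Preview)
Your proposal is correct and follows essentially the same route as the paper: you establish $\mu(x)=\int\mathbb E[D\vert X=x,Z=z,\mathbf P=\mathbf p]\,dF_{\mathbf P}(\mathbf p)$ for each~$z$ via assumptions~\ref{ass:endogeneity'}, \ref{ass:invertibility} and~\ref{ass:exclusion}, then invoke the transport-duality bounds of corollary~\ref{cor:bounds} with the enlarged conditioning set~$(X,Z)$, and intersect over~$z$. The paper's proof is terser (it simply says ``Corollary~\ref{cor:bounds} applies for every value of~$z$''), but your extra care in verifying that proposition~\ref{prop:sharp} transfers verbatim to conditioning on~$(X,Z)$ and in flagging the absolute-continuity caveat is appropriate and does not constitute a different approach.
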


\begin{proof}[Proof of proposition~\ref{cor:exclusion}]
    Under assumption~\ref{ass:invertibility}, $D(x,z)\perp (X,Z)\vert \eta \Rightarrow D(x,z)\perp (X,Z)\vert \mathbf P$. Then, we have
\begin{eqnarray*}
    \mu(x) & = & \mathbb E[D(x)] \\
    & = & \mathbb E[D(x,z)] \\
    & = & \int \mathbb E[D(x,z)\vert \mathbf P=\mathbf p]\;dF_{\mathbf P}(\mathbf p) \\
    & = & \int \mathbb E[D(x,z)\vert X=x, Z=z,\mathbf P=\mathbf p]\;dF_{\mathbf P}(\mathbf p) \\
    & = & \int \mathbb E[D\vert X=x, Z=z, \mathbf P=\mathbf p]\;dF_{\mathbf P}(\mathbf p).
\end{eqnarray*}
Corollary~\ref{cor:bounds} applies for every value of~$z\in\mathcal Z$, hence the result.
\end{proof}

\section{Inference}\label{sec:inference}

\subsection{Inference with matched data}
The average structural function is equal to~$\mu(x)=\int \mathbb E[D\vert X=x,\mathbf P=\mathbf p]\;dF_{\mathbf P}$ from proposition~\ref{prop:identification ASF}. Let~$\hat{\mathbb E}[D\vert X=x,\mathbf P=\mathbf p]$ be an estimator of the conditional expectation. Then, the sample analogue 
\begin{eqnarray}
    \hat\mu(x) & := & \frac{1}{n}\sum_{i=1}^{n}\hat{\mathbb E}[D\vert X=x,\mathbf P=\mathbf p_i]
\end{eqnarray}
can be used as an estimator for~$\mu(x)$. If a kernel estimator is used for the conditional expectation, then the standard nonparametric bootstrap delivers valid standard errors for~$\hat\mu(x)$.

\subsection{Inference with unmatched data}

Fix the value~$x$ of interest. We consider inference based on the bounds with exclusion restrictions from corollary~\ref{cor:exclusion}. Assume that the excluded variable~$Z\in\mathcal Z=\{z_1,\ldots,z_J\}$ takes a finite number of values and the densities~$f_{\mathbf P}$ and~$f_{\mathbf P\vert X=x,Z=z}$ are continuous functions of~$\mathbf p$ on~$[0,1]^d$. We derive a one-sided confidence bound for the upper bound:
\begin{eqnarray*}
    \mbox{UB}_x & := & \min_{\tiny
    \begin{array}{c}
        z\in\mathcal Z \\ \varphi\in[-K,K]
    \end{array}}
    \left\{ \int \max(-\varphi f_{\mathbf P\vert X=x,Z=z}(\mathbf p),f_{\mathbf P}(\mathbf p))\;d\mathbf p +\varphi\mathbb E[1-D\vert X=x,Z=z]\right\},
\end{eqnarray*}
for some large~$K>0$. The lower bound can be treated symmetrically.
Define the infinite dimensional parameter~$\theta_x:=(\theta_{x,z})_{z\in\mathcal Z}$ with
\begin{eqnarray*}
    \theta_{x,z} & := & \left( f_{\mathbf P\vert X=x,Z=z},f_{\mathbf P},\mathbb E[1-D\vert X=x,Z=z]\right),
\end{eqnarray*}
as well as a standard choice of estimator~$\hat\theta_{x}:=(\hat\theta_{x,z})_{z\in\mathcal Z}$ with
\begin{eqnarray*}
    \hat\theta_{x,z} & := & \left( \hat f_{\mathbf P\vert X=x,Z=z},\hat f_{\mathbf P},\hat{\mathbb E}[1-D\vert X=x,Z=z]\right),
\end{eqnarray*}
where~$\hat f_{\mathbf P\vert X=x,Z=z},$ $\hat f_{\mathbf P}$ and~$\hat{\mathbb E}[1-D\vert X=x,Z=z]$ are kernel estimators of~$f_{\mathbf P\vert X=x,Z=z},$ $f_{\mathbf P}$ and~$\mathbb E[1-D\vert X=x,Z=z]$ respectively. Such estimators are available in all standard software packages with automatic bandwidth procedures. Finally, define the bootstrapped estimator~$\theta^\ast_x:=(\theta^\ast_{x,z})_{z\in\mathcal Z}$ with
\begin{eqnarray*}
    \theta^\ast_{x,z} & := & \left( f^\ast_{\mathbf P\vert X=x,Z=z},f^\ast_{\mathbf P},\mathbb E^\ast[1-D\vert X=x,Z=z]\right),
\end{eqnarray*}
where~$f^\ast_{\mathbf P\vert X=x,Z=z},$ $f^\ast_{\mathbf P}$ and~$\mathbb E^\ast[1-D\vert X=x,Z=z]$ are standard nonparametric bootstrapped versions of~$\hat f_{\mathbf P\vert X=x,Z=z},$ $\hat f_{\mathbf P}$ and~$\hat{\mathbb E}[1-D\vert X=x,Z=z]$ respectively.

We apply the bootstrap procedure of \cite{fang2019inference} in this context. Since kernel estimators do not follow a functional CLT, validity would require discretization, or possibly a fixed bandwidth design, which is beyond the scope of this work. 

Define the function~$\bar\phi$ of parameter~$\theta$ as follows:
\begin{eqnarray*}
    \bar\phi: \begin{array}{cccc}
        l_\infty([0,1]^d)^{J+1}\times([0,1]^d)^J & \rightarrow & \mathbb R \\
       \theta_x  & \mapsto & \mbox{UB}_x.
    \end{array}
\end{eqnarray*}

\begin{proposition}[Directional differentiability]\label{prop:directional}
    The function~$\bar\phi:\theta_x\mapsto\mbox{UB}_{x}$ is Hadamard directionally differentiable on~$ l_\infty([0,1]^d)^{J+1}\times([0,1]^d)^J$.
\end{proposition}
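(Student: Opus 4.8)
The plan is to exhibit $\bar\phi$ as a minimum of Hadamard directionally differentiable maps over the compact index set $\mathcal Z\times[-K,K]$ and then carry Hadamard directional differentiability through the minimization. Write $\theta:=\theta_x$ for brevity, and for $(z,\varphi)\in\mathcal Z\times[-K,K]$ set
\[
\psi(\theta;z,\varphi)\ :=\ \int_{[0,1]^{d}}\max\!\bigl(-\varphi f_{\mathbf P\vert X=x,Z=z}(\mathbf p),\,f_{\mathbf P}(\mathbf p)\bigr)\,d\mathbf p\ +\ \varphi\,\mathbb E[1-D\vert X=x,Z=z],
\]
so that $\bar\phi(\theta)=\mbox{UB}_{x}=\min_{z\in\mathcal Z}\bar\phi_{z}(\theta)$ with $\bar\phi_{z}(\theta):=\min_{\varphi\in[-K,K]}\psi(\theta;z,\varphi)$. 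Each $\bar\phi_z$ is continuous (a minimum of the jointly continuous $\psi$ over a compact interval), and a minimum of finitely many continuous, Hadamard directionally differentiable maps is Hadamard directionally differentiable, with derivative $h\mapsto\min\{\bar\phi_{z}'(h):z\in\mathcal Z,\ \bar\phi_{z}(\theta)=\bar\phi(\theta)\}$; so it suffices to treat each $\bar\phi_{z}$.

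The core building block is that the integrated-maximum functional $T(g_{1},g_{2}):=\int_{[0,1]^{d}}\max(g_{1},g_{2})\,d\mathbf p$ is Hadamard directionally differentiable on $\mathcal C([0,1]^{d})^{2}$, with $T'_{(g_{1},g_{2})}(\eta_{1},\eta_{2})=\int[\eta_{1}\mathbf 1\{g_{1}>g_{2}\}+\eta_{2}\mathbf 1\{g_{1}<g_{2}\}+\max(\eta_{1},\eta_{2})\mathbf 1\{g_{1}=g_{2}\}]\,d\mathbf p$: along $t_{n}\downarrow0$ and $(\eta_{1,n},\eta_{2,n})\to(\eta_{1},\eta_{2})$ in sup norm the integrands $t_{n}^{-1}[\max(g_{1}+t_{n}\eta_{1,n},g_{2}+t_{n}\eta_{2,n})-\max(g_{1},g_{2})](\mathbf p)$ are uniformly bounded and converge pointwise to the bracketed limit, so dominated convergence gives the claim, and the derivative is plainly positively homogeneous and continuous in $(\eta_{1},\eta_{2})$. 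For fixed $(z,\varphi)$, $\theta\mapsto\psi(\theta;z,\varphi)$ is $T$ pre-composed with the continuous linear map $\theta\mapsto(-\varphi f_{\mathbf P\vert X=x,Z=z},f_{\mathbf P})$ plus a continuous linear functional, hence Hadamard directionally differentiable by the chain rule; moreover $\psi$ is jointly continuous in $(\theta,z,\varphi)$ and, being a pointwise maximum of an affine function of $\varphi$ and a constant, integrated, plus an affine term, is convex and Lipschitz in $\varphi$ for each $\theta$.

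The substantive step is the passage through $\min_{\varphi\in[-K,K]}$. The shortcut of viewing $\theta\mapsto\psi(\theta;z,\cdot)$ as a Hadamard directionally differentiable map into $\mathcal C([-K,K])$ and composing with the (always Hadamard directionally differentiable) infimum functional is unavailable: near a $\varphi^{*}$ for which $\{\mathbf p:-\varphi^{*}f_{\mathbf P\vert X=x,Z=z}(\mathbf p)=f_{\mathbf P}(\mathbf p)\}$ has positive Lebesgue measure the kink of $\max$ makes the difference quotients of $\psi(\cdot;z,\varphi)$ non-equicontinuous in $\varphi$, and the limiting derivative jumps in $\varphi$. I would argue directly instead. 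Fix $t_{n}\downarrow0$, $h_{n}\to h$ and minimizers $\varphi_{n}$ of $\psi(\theta+t_{n}h_{n};z,\cdot)$ over $[-K,K]$; along a subsequence $\varphi_{n}\to\varphi^{*}\in V:=\argmin_{\varphi}\psi(\theta;z,\varphi)$. Note $\psi$ is also Hadamard directionally differentiable jointly in $(\theta,\varphi)$ at any $(\theta,\varphi')$ — expand $-(\varphi'+sk)(f_{\mathbf P\vert X=x,Z=z}+s\eta_{1})$ and apply the $T$-lemma — and write $\mathrm{D}(h,k;\varphi')$ for its one-sided joint derivative at $(\theta,\varphi')$ in direction $(h,k)$. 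On the one hand, $\bar\phi_{z}(\theta+t_{n}h_{n})\le\psi(\theta+t_{n}h_{n};z,\varphi'+t_{n}k)$ for any $\varphi'\in V$ and $k\in\mathbb R$ (admissible for large $n$ since $V\subset(-K,K)$ when $K$ is large), together with $\psi(\theta;z,\varphi')=\bar\phi_z(\theta)$, gives $\limsup_{n}t_{n}^{-1}[\bar\phi_{z}(\theta+t_{n}h_{n})-\bar\phi_{z}(\theta)]\le\inf_{\varphi'\in V,\,k\in\mathbb R}\mathrm{D}(h,k;\varphi')$. On the other hand, convexity of $\psi(\theta;z,\cdot)$ closes the gap: the elementary bound $\psi(\theta;z,\varphi_{n})\le\psi(\theta+t_{n}h_{n};z,\varphi^{*})+O(t_{n})\le\psi(\theta;z,\varphi^{*})+O(t_{n})$ combined with a one-sided linear minorant of $\psi(\theta;z,\cdot)$ at $\varphi^{*}$ forces $|\varphi_{n}-\varphi^{*}|=O(t_{n})$ when $\varphi^{*}$ is a sharp minimizer, so $\varphi_{n}=\varphi^{*}+t_{n}k_{n}$ with $k_{n}$ bounded, $k_{n}\to k$ along a further subsequence, whence $t_{n}^{-1}[\bar\phi_{z}(\theta+t_{n}h_{n})-\bar\phi_{z}(\theta)]=t_{n}^{-1}[\psi\big((\theta,\varphi^{*})+t_{n}(h_{n},k_{n})\big)-\psi(\theta,\varphi^{*})]\to\mathrm{D}(h,k;\varphi^{*})\ge\inf_{\varphi'\in V,k'}\mathrm{D}(h,k';\varphi')$; when $\psi(\theta;z,\cdot)$ is instead flat near $\varphi^{*}$, the level sets $\{-\varphi f_{\mathbf P\vert X=x,Z=z}=f_{\mathbf P}\}$ are Lebesgue-null for $\varphi$ near $\varphi^{*}$, so $\theta\mapsto\psi(\theta;z,\varphi)$ is uniformly Hadamard directionally differentiable there and a standard Danskin argument applies, the $\varphi$-flat directions being absorbed into $V$. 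Matching the two bounds shows the limit exists, equals $\inf_{\varphi'\in V,\,k\in\mathbb R}\mathrm{D}(h,k;\varphi')$ (attained by compactness of $V$ and coercivity of $k\mapsto\mathrm{D}(h,k;\varphi')$), and is positively homogeneous and continuous in $h$; hence $\bar\phi_{z}$, and therefore $\bar\phi$, is Hadamard directionally differentiable.

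The main obstacle is exactly this passage through the inner minimization: the kink of $\max$ defeats the clean $\sup/\inf$-functional route, and the optimal $\varphi$ genuinely moves (at rate $t_{n}$), so the directional derivative is not the naive Danskin expression $\min_{\varphi^{*}\in V}\psi'_{\theta}(h;z,\varphi^{*})$ but carries an extra inner minimization over the perturbation rate $k$ of $\varphi$; convexity in $\varphi$ serves only to pin that rate down as $O(t_{n})$, which licenses the rescaling. An alternative that avoids the moving-optimum bookkeeping is to solve the inner program in closed form by convex duality: for $0<\mathbb E[1-D\vert X=x,Z=z]<1$ and $K$ large, $\bar\phi_{z}(\theta)$ equals one minus the $f_{\mathbf P}$-mass of the region where $f_{\mathbf P}/f_{\mathbf P\vert X=x,Z=z}$ falls below its $\mathbb E[1-D\vert X=x,Z=z]$-quantile under the $\mathbf P\vert X=x,Z=z$ distribution — a quantile/expected-shortfall-type functional — whose Hadamard directional differentiability (with the familiar care at atoms and flat stretches of the associated distribution function) is known, after which one again concludes with ``minimum of finitely many Hadamard directionally differentiable maps''.
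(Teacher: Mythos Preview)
Your approach is genuinely different from the paper's. The paper does not argue from first principles; it invokes Theorem~4.12 of \cite{bonnans2013perturbation}, which delivers Hadamard directional differentiability of a marginal function $\theta\mapsto\min_\varphi U(\varphi,\theta)$ under four hypotheses: joint continuity of $U$, inf-compactness, directional differentiability of $\theta\mapsto U(\varphi,\theta)$ for each fixed $\varphi$, and a regularity condition~(iv). The paper observes that $\theta\mapsto U(\varphi,\theta)$ is \emph{convex} --- the only nonlinearity is the pointwise $\max$, jointly convex in its arguments --- and cites Theorem~4.16 of the same reference for the implication ``convex $\Rightarrow$ (iv)''. That is the entire proof; the outer minimum over the finite set~$\mathcal Z$ is absorbed in the same stroke.

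Your direct argument has a gap in the $\liminf$ direction: the dichotomy ``sharp minimizer versus locally flat'' is not exhaustive. Take $d=1$, $f_{\mathbf P\vert X=x,Z=z}\equiv1$ and $f_{\mathbf P}(\mathbf p)=2\mathbf p$ on $[0,1]$; a short computation gives $\psi(\theta;z,\varphi)=1+\varphi^2/4+\varphi\,\mathbb E[1-D\vert X=x,Z=z]$ for $\varphi\in(-2,0)$, which is $C^1$ and strictly convex with a unique minimizer at which both one-sided derivatives vanish but with no flat stretch. There is then no nontrivial one-sided linear minorant, so your route to $|\varphi_n-\varphi^*|=O(t_n)$ fails (a quadratic minorant yields only $O(\sqrt{t_n})$), and your ``flat'' branch does not apply either. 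More generally, vanishing one-sided $\varphi$-derivatives at $\varphi^*$ do force the tie set $\{-\varphi^* f_{\mathbf P\vert X=x,Z=z}=f_{\mathbf P}\}$ to be null, but not the tie sets at \emph{nearby} $\varphi$, which is what your uniform-HDD/Danskin step would need. The ingredient you are missing is exactly the one the paper exploits: convexity of $\theta\mapsto\psi(\theta;z,\varphi)$ for each fixed $\varphi$. That convexity gives, for any sequence of minimizers $\varphi_n$ of $\psi(\theta+t_nh_n;z,\cdot)$,
\[
\bar\phi_z(\theta+t_nh_n)=\psi(\theta+t_nh_n;z,\varphi_n)\ \ge\ \psi(\theta;z,\varphi_n)+t_n\,\psi'_\theta(h_n;z,\varphi_n)\ \ge\ \bar\phi_z(\theta)+t_n\,\psi'_\theta(h_n;z,\varphi_n),
\]
and the matching $\liminf$ bound then follows from a semicontinuity argument for the directional derivative $\psi'_\theta$ --- no control on the rate $|\varphi_n-\varphi^*|$, no inner minimization over a perturbation rate $k$, and no case split is needed. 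This is precisely the mechanism behind the Bonnans--Shapiro theorem the paper cites, and it is why the paper's proof fits in four lines.
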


\begin{proof}[Proof of proposition~\ref{prop:directional}]
    We verify the conditions of theorem~4.12 page 272 of \cite{bonnans2013perturbation}. Fix~$z\in\mathcal Z$. Call
    \begin{eqnarray*}
        U(\varphi,\theta) & := & \int \max(-\varphi\theta_1,\theta_2)\;d\mathbf p+\varphi\theta_3,
    \end{eqnarray*}
    with~$\theta_1:=f_{\mathbf P\vert X=x,Z=z}\in l_\infty[0,1]^d$, $\theta_2:=f_{\mathbf P}\in l_\infty[0,1]^d$ and~$\theta_3:=\mathbb E[1-D\vert X=x,Z=z]\in[0,1]$.
    \begin{enumerate}
        \item $U(\varphi,\theta)$ is continuous in all its variables.
        \item The ``inf-compactness'' condition holds since 
        for all~$\theta$, $\{\varphi:U(\varphi,\theta)\leq 1\}$ is included in~$[-K,K]$ which is compact.
        \item For all~$\varphi\in[-K,K]$, the function~$\theta\mapsto U(\varphi,\theta)$ is directionally differentiable on~$ l_\infty([0,1]^d)^2\times[0,1]$. Indeed, the only non linear feature is the~$\max$.
        \item Condition~(iv) of theorem~4.12 page~272 of \cite{bonnans2013perturbation} also holds by convexity of~$\theta\mapsto U(\varphi,\theta)$ for all~$\varphi$. See the proof of theorem~4.16 page~276 of \cite{bonnans2013perturbation}.
    \end{enumerate}
\end{proof}

Following \cite{hong2018numerical}, estimate the directional derivative of function~$\bar\phi$ with~$\widehat{\bar\phi}^\prime_n$ defined for all~$h\in l_\infty([0,1]^d)^{J+1}\times([0,1]^d)^J$ by
\begin{eqnarray*}
    \widehat{\bar\phi}^\prime_n(h) & := & \frac{1}{\xi_n} \left( \bar\phi(\hat\theta_x+\xi_nh)-\bar\phi(\hat\theta_x)\right).
\end{eqnarray*}

We use~$\widehat{\bar\phi}^\prime_n(r_n(\theta^\ast_x-\hat\theta_x))$ to approximate the distribution of~$r_n(\bar\phi(\hat\theta_x)-\mbox{UB}_x)$, where~$r_n$ is the rate of convergence of~$\hat\theta_x$. Symmetrically, defining the function~$\underline\phi:\theta_x\mapsto\mbox{LB}_x$, $\widehat{\underline\phi}^\prime_n(r_n(\theta^\ast_x-\hat\theta_x))$ provides a valid approximation of the distribution of~$r_n(\underline\phi(\hat\theta_x)-\mbox{LB}_x)$. These approximations yield the~$(1-\alpha)$-level confidence region
\begin{eqnarray*}
    \left[ \; \underline\phi(\hat\theta_x)-r_n^{-1}\underline q_{1-\alpha/2} \; , \; \bar\phi(\hat\theta_x)-r_n^{-1}\bar q_{\alpha/2} \; \right],
\end{eqnarray*}
where~$\underline q_{1-\alpha/2}$ and~$\bar q_{\alpha/2}$ are quantiles of the distribution of~$\widehat{\underline\phi}^\prime_n(r_n(\theta^\ast_x-\hat\theta_x))$ and~$\widehat{\bar\phi}^\prime_n(r_n(\theta^\ast_x-\hat\theta_x))$ respectively. 

\section{Simulations}\label{sec:simulations}

\subsection{Data generating model}
\label{sec:simulation DGP}

We run simulation experiments to assess the informativeness of the data combination bounds and the performance of the proposed inference procedure. The data generating process is based on the retirement decision model of example~\ref{ex:pilot}. Suppose the agent types are normally distributed~$\eta\sim N(0,1)$. Health status~$X$ at the time of decision and~$Z$ at the time of elicitation take values in~$\{0 (\mbox{healthy}),1 (\mbox{unhealthy})\}$. The potential retirement decision follows
\begin{eqnarray*}
    D(x) & = & 1\{\Phi((x+1)\eta)\geq\nu\},
\end{eqnarray*}
where~$\nu\sim U[0,1]$ and~$\Phi$ is the cumulative distribution function of the standard normal distribution.
Assume that agents have rational expectations and no reporting bias, so that the stated probability of retiring given good health is
\begin{eqnarray*}
    \mathbf P & = & \mathbb P(\Phi(\eta)\geq\nu \vert \eta).
\end{eqnarray*}
Finally, health status at time of elicitation and decision are determined by
\begin{eqnarray*}
    X & = & 1\{\Phi(\eta) \leq \nu_x\}, \\
    Z & = & 1\{\Phi(\eta) \leq \nu_z\},
\end{eqnarray*}
where~$\nu$, $\nu_x$ and $\nu_z$ are independently and identically distributed. Note that the model is consistent with the interpretation that high types~$\eta$ are more health conscious, hence healthier and more likely to retire at either health status.

In this model, $\tilde m(x,\eta)=m(x,\eta)=1-\Phi((x+1)\eta)$. Moreover, $\mathbf P=m(1,\eta)=1-\Phi(\eta),$ so that~$\eta=\Phi^{-1}(1-\mathbf P)$. The true value of the average structural function is
\begin{eqnarray*}
    \mu(x)=\int (1-\Phi((x+1)\eta))\phi(\eta)d\eta.
\end{eqnarray*}
The bounds of corollary~\ref{cor:exclusion} can be easily computed in this model. Since~$\mathbf P=1-\Phi(\eta),$ we have~$f_{\mathbf P}(\mathbf p)=1$, and~$f_{\mathbf P\vert X=x,Z=z}(\mathbf p)=3\mathbf p^2$ (resp. $6\mathbf p(1-\mathbf p)$, $6\mathbf p(1-\mathbf p)$, and~$3(1-\mathbf p)^2$) if~$(x,z)=(1,1)$ (resp. $(1,0)$, $(0,1)$, $(0,0)$). Since~$D=XD(1)+(1-X)D(0)$, we have~$\mathbb E[1-D\vert X=x,Z=z]=\mathbb P(\Phi(2\Phi^{-1}(U))\leq \nu\vert U\leq \nu_x,U\leq \nu_z)=e_1\approx0.8270$ (resp. $\mathbb P(\Phi(2\Phi^{-1}(U))\leq \nu\vert U\leq \nu_x,U\geq \nu_z)=e_0\approx0.4993$, $\mathbb P(U\leq \nu\vert U\geq \nu_x,U\leq \nu_z)=1/2$, and~$\mathbb P(U\leq \nu\vert U\geq \nu_x,U\geq \nu_z)=1/4$) if~$(x,z)=(1,1)$ (resp. $(1,0)$, $(0,1)$, $(0,0)$), where~$U,\nu,\nu_x,\nu_z$ are four independent uniform random variables. Hence, the bounds are
\begin{eqnarray*}
    && \max_{z\in\{0,1\}}\sup_{\varphi\in\mathbb R} \left\{ \int \min(\varphi f_{\mathbf P\vert X=0, Z=z}(\mathbf p),1)\;d\mathbf p -\varphi\mathbb E[1-D\vert X=0,Z=z]\right\}  \\
    && =\max\left\{\sup_{\varphi\in\mathbb R} \left\{ \int \min(3\varphi (1-\mathbf p)^2,1)\;d\mathbf p -\frac{1}{4}\varphi\right\}, \sup_{\varphi\in\mathbb R} \left\{ \int \min(6\varphi \mathbf p(1-\mathbf p),1)\;d\mathbf p -\frac{1}{2}\varphi\right\}\right\}
    \\ && \leq \; \mu(x) \\
    &&\leq\min_{z\in\{0,1\}}\inf_{\varphi\in\mathbb R} \left\{ \int \max(-\varphi f_{\mathbf P\vert X=0,Z=z}(\mathbf p),1)\;d\mathbf p +\varphi\mathbb E[1-D\vert X=0,Z=z]\right\}\\
    &&=\min\left\{\inf_{\varphi\in\mathbb R} \left\{ \int \min(-3\varphi (1-\mathbf p)^2,1)\;d\mathbf p +\frac{1}{4}\varphi\right\}, \inf_{\varphi\in\mathbb R} \left\{ \int \min(-6\varphi \mathbf p(1-\mathbf p),1)\;d\mathbf p +\frac{1}{2}\varphi\right\}\right\},
\end{eqnarray*}
for~$x=0$, i.e., $0.371\leq\mu(0)\leq0.654$, and
\begin{eqnarray*}
    && \max_{z\in\{0,1\}}\sup_{\varphi\in\mathbb R} \left\{ \int \min(\varphi f_{\mathbf P\vert X=1, Z=z}(\mathbf p),1)\;d\mathbf p -\varphi\mathbb E[1-D\vert X=1,Z=z]\right\}  \\
    && =\max\left\{\sup_{\varphi\in\mathbb R} \left\{ \int \min(3\varphi \mathbf p^2,1)\;d\mathbf p -\varphi e_1\right\}, \sup_{\varphi\in\mathbb R} \left\{ \int \min(6\varphi \mathbf p(1-\mathbf p),1)\;d\mathbf p -\varphi e_0\right\}\right\}
    \\ && \leq \; \mu(x) \\
    &&\leq\min_{z\in\{0,1\}}\inf_{\varphi\in\mathbb R} \left\{ \int \max(-\varphi f_{\mathbf P\vert X=1,Z=z}(\mathbf p),1)\;d\mathbf p +\varphi\mathbb E[1-D\vert X=1,Z=z]\right\}\\
    &&=\min\left\{\inf_{\varphi\in\mathbb R} \left\{ \int \min(-3\varphi \mathbf p^2,1)\;d\mathbf p +\varphi e_1\right\}, \inf_{\varphi\in\mathbb R} \left\{ \int \min(-6\varphi \mathbf p(1-\mathbf p),1)\;d\mathbf p +\varphi e_0\right\}\right\},
\end{eqnarray*}
for~$x=1$ i.e., $0.346\leq\mu(1)\leq0.559$.

\subsection{Inference details}
\label{sec:simulation inference}

In each simulation instance, we derive a sample of~$n$ independent values of~$\eta_i\sim N(0,1)$, and~$\nu_i,\nu_{ix},\nu_{iz}$ iid~$U[0,1]$, for~$i=1,\ldots,n$. We compute~$X_i$, $Z_i$, $\mathbf P_i$ and~$D_i=X_iD_i(1)+(1-X_i)D_i(0)$ according to the data generating model in the subsection~\ref{sec:simulation DGP}.

Call~$n_{xz}$ the size of the subsample of individuals~$i$ with~$X_i=x$ and~$Z_i=z$. Call~$s$ the sample standard error for~$\mathbf P_i$ and~$s_{xz}$ the sample standard error for~$\mathbf P_i$ in the subsample of individuals with~$X_i=x$ and~$Z_i=z$. 

Let~$\hat f_{\mathbf P}$ be the kernel density estimator for~$f_{\mathbf P}$ with the STATA defaults, i.e., the Epanechikov kernel and bandwidth~$h:=s n^{-1/5}$. Similarly, let~$\hat f_{\mathbf P\vert X=x,Z=z}$ be the kernel density estimator for~$f_{\mathbf P}$ in the subsample of individuals~$i$ with~$X_i=x$ and~$Z_i=z$, with the Epanechikov kernel and bandwidth~$h_{xz}:=s_{xz} n_{xz}^{-1/5}$. Finally, let~$\hat{\mathbb E}[D\vert X=x,Z=z]$ be the sample average of~$D_i$ in the subsample of individuals~$i$ with~$X_i=x$ and~$Z_i=z$.

Draw~$B$ bootstrap size~$n$ resamples~$(D_i^b,\mathbf P_i^b,X_i^b,Z_i^b)_i$ from the initial sample~$(D_i,\mathbf P_i,X_i,Z_i)_i$, and for each bootstrap sample, compute~$f^b_{\mathbf P}$,  $f^b_{\mathbf P\vert X=x,Z=z}$ and~$\mathbb E^b[D\vert X=x,Z=z]$ from sample~$(D_i^b,\mathbf P_i^b,X_i^b,Z_i^b)_i$ exactly as~$\hat f_{\mathbf P}$,  $\hat f_{\mathbf P\vert X=x,Z=z}$ and~$\hat{\mathbb E}[D\vert X=x,Z=z]$ were computed from sample~$(D_i,\mathbf P_i,X_i,Z_i)_i$. Compute the numerical Hadamard directional derivative~$\bar\phi^b_x$ (resp.~$\underline\phi^b_x$) of the upper (resp. lower) bound at the bootstrap distribution~$r_n(\theta^\ast-\hat\theta)$, where~$r_n$ is the rate of convergence of the kernel estimator, i.e.,~$r_n=n^{2/5}$, and the tuning parameter~$\xi_n$ satisfies~$1/\xi_n+\xi_nr_n\rightarrow\infty$ (for instance~$\xi_n=n^{-3/10}$):\footnote{Note that we neglect the variability of~$\hat{\mathbb E}[1-D\vert X=x,Z=z]$ because its rate of convergence is faster than that of~$\hat f_{\mathbf P}$ and~$\hat f_{\mathbf P\vert X=x,Z=z}$.}
\begin{eqnarray*}
    \bar\phi^b_x & := & \frac{1}{\xi_n} \left( \min_{\tiny
    \begin{array}{c}
        z\in\mathcal Z \\ \varphi\in[-K,K]
    \end{array}}
    \left\{ \int \max\left(-\varphi ( \hat f_{\mathbf P\vert X=x,Z=z}(\mathbf p) + \xi_nr_n(f^b_{\mathbf P\vert X=x,Z=z}(\mathbf p)-\hat f_{\mathbf P\vert X=x,Z=z}(\mathbf p))),\right.\right.\right.\\
    && \hskip57pt \left.\left. \hat f_{\mathbf P}(\mathbf p) +\xi_nr_n(f^b_{\mathbf P}(\mathbf p)-\hat f_{\mathbf P}(\mathbf p))\right)\;d\mathbf p +\varphi\hat{\mathbb E}[1-D\vert X=x,Z=z]\right\} \\
    && \left. - \min_{\tiny
    \begin{array}{c}
        z\in\mathcal Z \\ \varphi\in[-K,K]
    \end{array}}
    \left\{ \int \max(-\varphi \hat f_{\mathbf P\vert X=x,Z=z}(\mathbf p),\hat f_{\mathbf P}(\mathbf p))\;d\mathbf p +\varphi\hat{\mathbb E}[1-D\vert X=x,Z=z]\right\} \right),
    \\
    \underline\phi^b_x & := & \frac{1}{\xi_n} \left( \max_{\tiny
    \begin{array}{c}
        z\in\mathcal Z \\ \varphi\in[-K,K]
    \end{array}}
    \left\{ \int \min\left(\varphi ( \hat f_{\mathbf P\vert X=x,Z=z}(\mathbf p) + \xi_nr_n(f^b_{\mathbf P\vert X=x,Z=z}(\mathbf p)-\hat f_{\mathbf P\vert X=x,Z=z}(\mathbf p))),\right.\right.\right.\\
    && \hskip60pt \left.\left. \hat f_{\mathbf P}(\mathbf p) +\xi_nr_n(f^b_{\mathbf P}(\mathbf p)-\hat f_{\mathbf P}(\mathbf p))\right)\;d\mathbf p -\varphi\hat{\mathbb E}[1-D\vert X=x,Z=z]\right\} \\
    && \left. - \max_{\tiny
    \begin{array}{c}
        z\in\mathcal Z \\ \varphi\in[-K,K]
    \end{array}}
    \left\{ \int \min(\varphi \hat f_{\mathbf P\vert X=x,Z=z}(\mathbf p),\hat f_{\mathbf P}(\mathbf p))\;d\mathbf p -\varphi\hat{\mathbb E}[1-D\vert X=x,Z=z]\right\} \right).
\end{eqnarray*}
Hence, $\bar\phi^b_x$ approximates a draw from the distribution of~$r_n(\bar\phi(\hat\theta_x)-UB_x)$, and~$\underline\phi^b_x$ approximates a draw from the distribution of~$r_n(\underline\phi(\hat\theta_x)-LB_x)$. 

Call~$\bar\phi^{(b)}_x$ and~$\underline\phi^{(b)}_x$,~$b=1,\ldots,B$, the order statistics, where~$\bar\phi^{(1)}_x$ and~$\underline\phi^{(1)}_x$ are the largest values. Then, denoting~$\lfloor \cdot \rfloor$ and~$\lceil \cdot \rceil$ the floor and ceiling functions respectively,
\begin{eqnarray*}
    \left[ \; \underline\phi(\hat\theta_x)-r_n^{-1}\underline\phi^{( \lceil B\alpha/2 \rceil )}_x \; , \; \bar\phi(\hat\theta_x)-r_n^{-1}\bar\phi^{( \lfloor B(1-\alpha/2) \rfloor )}_x \; \right]
\end{eqnarray*} is the bootstrap confidence region for the true value of~$\mu(x)$ et the level of significance~$\alpha$.

\subsection{Simulation results}

For each simulation exercise, we repeat the procedure in subsection~\ref{sec:simulation inference}~$1,000$ times and report the coverage rate of the true value and the average length of the confidence interval relative to the identified set. We repeat this for sample sizes~$n=500,$ $n=1,000$ and~$n=2,000$, and tuning parameter values~$\xi_n=0.5n^{-3/10}$, $\xi_n=0.75n^{-3/10}$, $\xi_n=n^{-3/10}$, and~$\xi_n=1.5n^{-3/10}$. Table~\ref{table:simulations} reports the rate of coverage of the true value of~$\mu(0)$ and the excess length of the confidence region, relative to the identified set. Coverage rates close to~$1$ are expected, since the true value is an interior point in the identified set. The confidence bands are narrow and decline both with the sample size, as expected, and somewhat also with the value of the tuning parameter~$\xi_n$.

\begin{table}[h]
\caption{``Coverage'' is the rate at which the $95\%$ confidence interval covers the true value of~$\mu(0)$. ``Length'' is the length of the confidence interval minus the length of the identified set. Sample size is~$n$ and~$\xi_n$ is the tuning parameter in the numerical Hadamard directional derivative. Results are based on~$1,000$ bootstrap replications in each of the~$1,000$ simulation instances.}
\begin{center}
\begin{tabular}{l r r r r r} 
\toprule 
&& \multicolumn{4}{c}{$\xi_n$}  \\ 
\cmidrule(l){3-6} 
\\ 
 & & $0.5n^{-3/10}$ & $0.75n^{-3/10}$ & $n^{-3/10}$ & $1.5n^{-3/10}$ \\ \\
& $n$ & &  &  \\ 
  \midrule 
Coverage  & $500$ & $0.999$ & $0.996$ & $0.997$ & $0.994$ \\ \\ 
  & $1,000$ & $1$ & $1$ & $1$ & $1$  \\ \\ 
  & $2,000$ & $1$ & $1$ & $1$ & $1$  \\ \\ 
Length  & $500$ & $0.065$ & $0.052$ & $0.037$ & $0.021$ \\ \\ 
  & $1,000$ & $0.048$ & $0.039$ & $0.026$ & $0.015$ \\ \\ 
  & $2,000$ & $0.030$ & $0.026$ & $0.021$ & $0.012$ \\ \\ 
\bottomrule 
\end{tabular}
\end{center}
\label{table:simulations}
\end{table}


\section*{Discussion}

In this paper, we consider agents making binary decisions based on an endogenous choice attribute. We propose to use stated choice probabilities to correct for the endogeneity. We eschew structural assumptions on the relation between stated choice probabilities and actual choice, and instead assume that stated choice probabilities reveal the unobserved heterogeneity that also governs actual choices. For the common case, where stated choice probabilities and actual choices are observed in different data sets, we derive new bounds for counterfactual choice under data combination. These bounds are useful beyond the combination of stated and revealed preferences considered in this work. Although we study binary choice, most of the ideas and results extend to discrete and continuous choice, but the data combination bounds would take a more complex form. The inference procedure proposed here is best suited to empirical environments with scalar or at most bivariate stated probability reports. For empirical environments with elicited preferences from a rich set of counterfactual scenarios, we propose, in ongoing research, to identify a finite number of latent types from elicited preferences using methods from the large factor model and group fixed effects literature.


\bibliographystyle{apalike}
\bibliography{Elicited}

@article{arcidiacono2020,
  title={Ex ante returns and occupational choice},
  author={Arcidiacono, Peter and Hotz, V Joseph and Maurel, Arnaud and Romano, Teresa},
  journal={Journal of Political Economy},
  volume={128},
  number={12},
  pages={4475--4522},
  year={2020},
}

@article{blass2010,
  title={Using elicited choice probabilities to estimate random utility models: Preferences for electricity reliability},
  author={Blass, Asher A and Lach, Saul and Manski, Charles F},
  journal={International Economic Review},
  volume={51},
  number={2},
  pages={421--440},
  year={2010},
}

@article{manski2004,
  title={Measuring expectations},
  author={Manski, Charles F},
  journal={Econometrica},
  volume={72},
  number={5},
  pages={1329--1376},
  year={2004},
}

@article{vanderklaauw2012,
  title={On the use of expectations data in estimating structural dynamic choice models},
  author={Van der Klaauw, Wilbert},
  journal={Journal of Labor Economics},
  volume={30},
  number={3},
  pages={521--554},
  year={2012},
}

@article{wiswall2015,
  title={Determinants of college major choice: Identification using an information experiment},
  author={Wiswall, Matthew and Zafar, Basit},
  journal={The Review of Economic Studies},
  volume={82},
  number={2},
  pages={791--824},
  year={2015},
}

@unpublished{meango2022,
  title={The Option Value of Overstaying},
  author={M\'{e}ango, Romuald and Poinas, Fran\c{c}ois},
  note={CESifo Working Paper 10536},
  year={2023}
}

@article{debresser2019,
  title={The predictive power of subjective probabilities: probabilistic and deterministic polling in the Dutch 2017 election},
  author={de Bresser, Jochem and van Soest, Arthur},
  journal={Journal of the Royal Statistical Society: Series A (Statistics in Society)},
  volume={182},
  number={2},
  pages={443--466},
  year={2019},
}

@article{hurd2009,
  title={Subjective probabilities in household surveys},
  author={Hurd, Michael D},
  journal={Annual Review of Economics},
  volume={1},
  number={1},
  pages={543--562},
  year={2009},
}

@unpublished{adams2019,
  title={Preferences and beliefs in the marriage market for young brides},
  author={Adams-Prassl, Abi and Andrew, Alison},
  year={2019},
  note={CEPR Discussion Paper No. DP13567}
}

@article{boyer2020,
  title={Long-term care insurance: Information frictions and selection},
  author={Boyer, M Martin and De Donder, Philippe and Fluet, Claude and Leroux, Marie-Louise and Michaud, Pierre-Carl},
  journal={American Economic Journal: Economic Policy},
  volume={12},
  number={3},
  pages={134--69},
  year={2020}
}

@article{heckman1997,
  title={Making the most out of programme evaluations and social experiments: Accounting for heterogeneity in programme impacts},
  author={Heckman, James J and Smith, Jeffrey and Clements, Nancy},
  journal={The Review of Economic Studies},
  volume={64},
  number={4},
  pages={487--535},
  year={1997},
}

@article{ameriks2020a,
  title={Long-term-care utility and late-in-life saving},
  author={Ameriks, John and Briggs, Joseph and Caplin, Andrew and Shapiro, Matthew D and Tonetti, Christopher},
  journal={Journal of Political Economy},
  volume={128},
  number={6},
  pages={2375--2451},
  year={2020},
}

@article{ameriks2020b,
  title={Older Americans would work longer if jobs were flexible},
  author={Ameriks, John and Briggs, Joseph and Caplin, Andrew and Lee, Minjoon and Shapiro, Matthew D and Tonetti, Christopher},
  journal={American Economic Journal: Macroeconomics},
  volume={12},
  number={1},
  pages={174--209},
  year={2020}
}

@article{wiswall2018,
  title={Preference for the workplace, investment in human capital, and gender},
  author={Wiswall, Matthew and Zafar, Basit},
  journal={The Quarterly Journal of Economics},
  volume={133},
  number={1},
  pages={457--507},
  year={2018},
}

@article{kocsar2022,
  title={Understanding migration aversion using elicited counterfactual choice probabilities},
  author={Ko{\c{s}}ar, Gizem and Ransom, Tyler and Van der Klaauw, Wilbert},
  journal={Journal of Econometrics},
  volume={231},
  number={1},
  pages={123-147},
  year={2022},
}

@techreport{gong2022,
  title={The role of non-pecuniary considerations: Locations decisions of college graduates from low income backgrounds},
  author={Gong, Yifan and Stinebrickner, Todd and Stinebrickner, Ralph and Yao, Yuxi},
  year={2022},
  institution={University of Western Ontario, Centre for Human Capital and Productivity (CHCP)}
}

@article{delavande2019,
  title={University choice: The role of expected earnings, nonpecuniary outcomes, and financial constraints},
  author={Delavande, Adeline and Zafar, Basit},
  journal={Journal of Political Economy},
  volume={127},
  number={5},
  pages={2343--2393},
  year={2019},
}

@incollection{kocsar2023,
  title={Expectations data in structural microeconomic models},
  author={Ko{\c{s}}ar, Gizem and O'Dea, Cormac},
  booktitle={Handbook of Economic Expectations},
  pages={647--675},
  year={2023},
  publisher={Elsevier}
}

@article{murphy2005,
  title={A meta-analysis of hypothetical bias in stated preference valuation},
  author={Murphy, James J and Allen, P Geoffrey and Stevens, Thomas H and Weatherhead, Darryl},
  journal={Environmental and Resource Economics},
  volume={30},
  pages={313--325},
  year={2005},
}

@techreport{briggs2020,
  title={Estimating marginal treatment effects with survey instruments},
  author={Briggs, Joseph and Caplin, Andrew and Leth-Petersen, S{\o}ren and Tonetti, Christopher and Violante, Gianluca},
  year={2020},
  institution={Working Paper}
}

@techreport{bernheim2022,
  title={Causal inference from hypothetical evaluations},
  author={Bernheim, B Douglas and Bj{\"o}rkegren, Daniel and Naecker, Jeffrey and Pollmann, Michael},
  year={2021},
  institution={National Bureau of Economic Research}
}

@article{almaas2024,
  title={Presidential Address: Economics and Measurement: New measures to model decision making},
  author={Alm{\aa}s, Ingvild and Attanasio, Orazio and Jervis, Pamela},
  journal={Econometrica},
  volume={92},
  number={4},
  pages={947--978},
  year={2024},
}

@article{kesternich2013,
  title={Suit the action to the word, the word to the action: Hypothetical choices and real decisions in Medicare Part D},
  author={Kesternich, Iris and Heiss, Florian and McFadden, Daniel and Winter, Joachim},
  journal={Journal of Health Economics},
  volume={32},
  number={6},
  pages={1313--1324},
  year={2013},
}

@article{mas1979,
  title={Homeomorphisms of compact, convex sets and the Jacobian matrix},
  author={Mas-Colell, Andreu},
  journal={SIAM Journal on Mathematical Analysis},
  volume={10},
  number={6},
  pages={1105--1109},
  year={1979},
}

@article{haghani2021b,
  title={Hypothetical bias in stated choice experiments: Part II. Conceptualisation of external validity, sources and explanations of bias and effectiveness of mitigation methods},
  author={Haghani, Milad and Bliemer, Michiel CJ and Rose, John M and Oppewal, Harmen and Lancsar, Emily},
  journal={Journal of Choice Modelling},
  volume={41},
  pages={100322},
  year={2021},
}

@article{haghani2021a,
  title={Hypothetical bias in stated choice experiments: Part I. Macro-scale analysis of literature and integrative synthesis of empirical evidence from applied economics, experimental psychology and neuroimaging},
  author={Haghani, Milad and Bliemer, Michiel CJ and Rose, John M and Oppewal, Harmen and Lancsar, Emily},
  journal={Journal of choice modelling},
  volume={41},
  pages={100309},
  year={2021},
}

@article{giustinelli2024,
  title={SeaTE: Subjective ex ante Treatment Effect of Health on Retirement},
  author={Giustinelli, Pamela and Shapiro, Matthew D},
  journal={American Economic Journal: Applied Economics},
  volume={16},
  number={2},
  pages={278--317},
  year={2024},
}

@article{giustinelli2023,
  title={Expectations in education},
  author={Giustinelli, Pamela},
  journal={Handbook of Economic Expectations},
  pages={193--224},
  year={2023},
}

@article{maestas2023,
  title={The value of working conditions in the United States and implications for the structure of wages},
  author={Maestas, Nicole and Mullen, Kathleen J and Powell, David and Von Wachter, Till and Wenger, Jeffrey B},
  journal={American Economic Review},
  volume={113},
  number={7},
  pages={2007--2047},
  year={2023},
}

@article{mas2017,
  title={Valuing alternative work arrangements},
  author={Mas, Alexandre and Pallais, Amanda},
  journal={American Economic Review},
  volume={107},
  number={12},
  pages={3722--3759},
  year={2017},
}

@article{wiswall2021,
  title={Human capital investments and expectations about career and family},
  author={Wiswall, Matthew and Zafar, Basit},
  journal={Journal of Political Economy},
  volume={129},
  number={5},
  pages={1361--1424},
  year={2021},
}

@article{pantano2013,
  title={Using Subjective Expectations Data to Allow for Unobserved Heterogeneity in Hotz-Miller Estimation Strategies},
  author={Pantano, Juan and Zheng, Yu},
  journal={Available at SSRN 2129303},
  year={2013}
}

@book{train2009,
  title={Discrete choice methods with simulation},
  author={Train, Kenneth E},
  year={2009},
  publisher={Cambridge university press}
}

@incollection{mcfadden2017,
  title={Stated preference methods and their applicability to environmental use and non-use valuations},
  author={McFadden, Daniel},
  booktitle={Contingent valuation of environmental goods},
  pages={153--187},
  year={2017},
  publisher={Edward Elgar Publishing}
}

@article{low2024,
  title={Pricing the biological clock: The marriage market costs of aging to women},
  author={Low, Corinne},
  journal={Journal of Labor Economics},
  volume={42},
  number={2},
  pages={395--426},
  year={2024},
}

@incollection{morikawa2002,
  title={Discrete choice models incorporating revealed preferences and psychometric data},
  author={Morikawa, Taka and Ben-Akiva, Moshe and McFadden, Daniel},
  booktitle={Advances in econometrics},
  pages={29--55},
  year={2002},
  publisher={Emerald Group Publishing Limited}
}

@article{meango2024,
  title={Identification of ex ante returns using elicited choice probabilities: an Application to Preferences for Public-sector Jobs},
  author={M\'eango, Romuald and Girsberger, Esther Mirjam},
  journal={arXiv preprint arXiv:2307.13966v2},
  year={2024}
}

@incollection{juster1964,
  title={Consumer sensitivity to finance rates},
  author={Juster, F Thomas and Shay, Robert P},
  booktitle={Consumer sensitivity to finance rates: An empirical and analytical investigation},
  pages={6--46},
  year={1964},
  publisher={NBER}
}

@techreport{attanasio2019,
  title={Subjective parental beliefs. their measurement and role},
  author={Attanasio, Orazio and Cunha, Fl{\'a}vio and Jervis, Pamela},
  year={2019},
  institution={National Bureau of Economic Research}
}

@article{hainmueller2015,
  title={Validating vignette and conjoint survey experiments against real-world behavior},
  author={Hainmueller, Jens and Hangartner, Dominik and Yamamoto, Teppei},
  journal={Proceedings of the National Academy of Sciences},
  volume={112},
  number={8},
  pages={2395--2400},
  year={2015},
}

@article{hausman2012,
  title={Contingent valuation: from dubious to hopeless},
  author={Hausman, Jerry},
  journal={Journal of Economic Perspectives},
  volume={26},
  number={4},
  pages={43--56},
  year={2012},
}

@article{ruschendorf1991bounds,
  title={Bounds for distributions with multivariate marginals},
  author={R{\"u}schendorf, Ludger},
  journal={Lecture Notes-Monograph Series},
  pages={285--310},
  year={1991},
}

@article{fan2025partial,
  title={Partial Identification in Moment Models with Incomplete Data via Optimal Transport},
  author={Fan, Yanqin and Pass, Brendan and Shi, Xuetao},
  journal={arXiv preprint arXiv:2503.16098},
  year={2025}
}

@article{batista2025,
title = {What matters for the decision to study abroad? A lab-in-the-field experiment in Cape Verde},
journal = {Journal of Development Economics},
volume = {173},
pages = {103401},
year = {2025},
author = {Catia Batista and David M. Costa and Pedro Freitas and Gonçalo Lima and Ana B. Reis},
}

@book{bonnans2013perturbation,
  title={Perturbation analysis of optimization problems},
  author={Bonnans, J Fr{\'e}d{\'e}ric and Shapiro, Alexander},
  year={2013},
  publisher={Springer Science \& Business Media}
}

@article{fang2019inference,
  title={Inference on directionally differentiable functions},
  author={Fang, Zheng and Santos, Andres},
  journal={The Review of Economic Studies},
  volume={86},
  number={1},
  pages={377--412},
  year={2019},
}

@article{hong2018numerical,
  title={The numerical delta method},
  author={Hong, Han and Li, Jessie},
  journal={Journal of Econometrics},
  volume={206},
  number={2},
  pages={379--394},
  year={2018},
}

@book{villani2021topics,
  title={Topics in optimal transportation},
  author={Villani, C{\'e}dric},
  volume={58},
  year={2003},
  publisher={American Mathematical Soc.}
}

@article{bontemps2025functional,
  title={Functional ecological inference},
  author={Bontemps, Christian and Florens, Jean-Pierre and Meddahi, Nour},
  journal={Journal of Econometrics},
  volume={248},
  pages={105918},
  year={2025},
}

@article{fan2014identifying,
  title={Identifying treatment effects under data combination},
  author={Fan, Yanqin and Sherman, Robert and Shum, Matthew},
  journal={Econometrica},
  volume={82},
  number={2},
  pages={811--822},
  year={2014},
}

@article{fan2016estimation,
  title={Estimation and inference in an ecological inference model},
  author={Fan, Yanqin and Sherman, Robert and Shum, Matthew},
  journal={Journal of Econometric Methods},
  volume={5},
  number={1},
  pages={17--48},
  year={2016},
}

@article{pacini2019two,
  title={Two-sample least squares projection},
  author={Pacini, David},
  journal={Econometric Reviews},
  volume={38},
  number={1},
  pages={95--123},
  year={2019},
}

@article{gaillac2024linear,
  title={Linear Regressions with Combined Data},
  author={D'Haultfoeuille, Xavier and Gaillac, Christophe and Maurel, Arnaud},
  journal={arXiv preprint arXiv:2412.04816},
  year={2024}
}

@article{cross2002regressions,
  title={Regressions, short and long},
  author={Cross, Philip J and Manski, Charles F},
  journal={Econometrica},
  volume={70},
  number={1},
  pages={357--368},
  year={2002},
}

@article{gaillac2025partially,
  title={Partially linear models under data combination},
  author={D'Haultfoeuille, Xavier and Gaillac, Christophe and Maurel, Arnaud},
  journal={Review of Economic Studies},
  volume={92},
  number={1},
  pages={238--267},
  year={2025},
}

@article{molinari2006generalization,
  title={Generalization of a result on ``regressions, short and long'''},
  author={Molinari, Francesca and Peski, Marcin},
  journal={Econometric Theory},
  volume={22},
  number={1},
  pages={159--163},
  year={2006},
}

@techreport{ichimura2005identification,
  title={Identification and estimation of GMM models by combining two data sets},
  author={Ichimura, Hidehiko and Martinez-Sanchis, Elena},
  year={2005},
  institution={Working paper, UCL and CEMMAP}
}

@techreport{lee2019identification,
  title={Identification and Estimation of Treatment Effects with Instrumental Variables under Data Combination},
  author={Lee, Ryan},
  year={2019},
  institution={Northwestern University}
}

@article{athey2025combining,
  title={Combining experimental and observational data to estimate treatment effects on long term outcomes},
  author={Athey, Susan and Chetty, Raj and Imbens, Guido},
  journal={arXiv preprint arXiv:2006.09676},
  year={2025}
}

@article{horowitz1995identification,
  title={Identification and robustness with contaminated and corrupted data},
  author={Horowitz, Joel L and Manski, Charles F},
  journal={Econometrica},
  pages={281--302},
  year={1995},
}

@article{heckman1985alternative,
  title={Alternative methods for evaluating the impact of interventions: An overview},
  author={Heckman, James J and Robb, Richard},
  journal={Journal of econometrics},
  volume={30},
  number={1-2},
  pages={239--267},
  year={1985},
}

\end{document}